\newcommand{\argmax}[1]{\underset{#1}
{\operatorname{arg}\,\operatorname{max}}\;}
\newcommand{\bs}{\boldsymbol}
\def\mbb{\mathbb}
\def\mcal{\mathcal}
\def\mc{\mathcal}
\def\tbfp{\textbf{p}}
\newtheorem{theorem}{Theorem}[section]
\newtheorem{lemma}{Lemma}[section]
\newtheorem{proposition}{Proposition}[section]
\newtheorem{remark}{Remark}[section]
\newtheorem{definition}{Definition}[section]
\begin{document}

\title{Reinforcement Strategies in General Lotto Games}

\author{Keith Paarporn, Rahul Chandan, Mahnoosh Alizadeh, Jason R. Marden\thanks{K. Paarporn is with the Department of Computer Science at the University of Colorado, Colorado Springs. R. Chandan is with Amazon Robotics. M. Alizadeh and J. R. Marden are with the Department of Electrical and Computer Engineering at the University of California, Santa Barbara, CA. Contact: \texttt{kpaarpor@uccs.edu}; \texttt{rcd@amazon.com}; \texttt{ \{alizadeh,jrmarden\}@ucsb.edu} This work is supported by UCOP Grant LFR-18-548175, ONR grant \#N00014-20-1-2359, and AFOSR grants \#FA9550-20-1-0054 and \#FA9550-21-1-0203. This manuscript extends a previous conference version \cite{chandan2022strategic}. }
}

\maketitle

\begin{abstract}
    Strategic decisions are often made over multiple periods of time, wherein decisions made earlier impact a competitor's  success in later stages. In this paper, we study these dynamics in General Lotto games, a class of models describing the competitive allocation of resources between two opposing players. We propose a two-stage formulation where one of the players has reserved resources that can be strategically  \emph{pre-allocated} across the battlefields in the first stage of the game as reinforcements.  The players then simultaneously allocate their remaining \emph{real-time} resources, which can be randomized, in a decisive final stage. Our main contributions provide complete characterizations of the optimal reinforcement strategies and resulting equilibrium payoffs in these multi-stage General Lotto games.  Interestingly, we determine that real-time resources are at least twice as effective as reinforcement resources when considering equilibrium payoffs. 
\end{abstract}

\section{Introduction}\label{sec:intro}


System planners must make investment decisions to mitigate the risks posed by disturbances or adversarial interference.  In many practical settings, these investments are made and built over time, leading up to a decisive point of conflict.  Security measures in cyber-physical systems and public safety are deployed and accumulated over long periods of time. Attackers can consequently use knowledge of the pre-deployed elements to identify vulnerabilities and exploits in the defender's strategy \cite{etesami2019dynamic,brown2006defending,zhang2013protecting}. Many types of contests involves deciding how much effort to exert over multiple rounds of competition \cite{konrad2009strategy,Aidt_2019,sela2023resource,yildirim2005contests,kovenock2008electoral,dechenaux2015survey}.

Indeed, investment decisions are dynamic, where early investments affect how successful a competitor is at later points in time. Many of these scenarios involve the strategic allocation of resources, exhibiting trade-offs between the costs of investing resources in earlier periods and reserving resources for later stages. In particular, an adversary is often able to learn how the resources were allocated in the earlier periods and can exploit this knowledge in later periods.


In this manuscript, we seek to characterize the interplay between early and late resource investments. We study these elements in General Lotto games, a game-theoretic framework that describes the competitive allocation of resources between opponents. The General Lotto game is a popular variant of the classic Colonel Blotto game, wherein two budget-constrained players, $A$ and $B$, compete over a set of valuable battlefields. The player that deploys more resources to a battlefield wins its associated value, and the objective for each player is to win as much value as possible. Outcomes in the standard formulations are determined by a single simultaneous allocation of resources, i.e. they are typically studied as one-shot games.

The formulations considered in this paper focus on a multi-stage version of the General Lotto game where one of the players can reinforce various battlefields before the competition begins by pre-allocating resources to battlefields; hence, we refer to these reinforcement strategies as pre-allocation strategies.   More formally,  our analysis is centered on the following multi-stage scenario: Player $A$ is endowed with $P \geq 0$ resources to be pre-allocated, and both players possess real-time resources $R_A, R_B \geq 0$ to be allocated at the time of competition. In the first stage, player $A$ decides how to  deploy the pre-allocated resources $P$ over the battlefields. The pre-allocation decision is binding and known to player $B$. In the final stage, both players engage in a General Lotto game where they simultaneously decide how to deploy their real-time resources, and payoffs are subsequently derived. 

The pre-allocated resources may represent, for example, the installation of anti-virus tools on system servers. The capabilities of anti-virus software are typically static and well-known, and thus a potential attacker would have knowledge about the system's base level of defensive capability. However, the attacker would not generally have knowledge about the system's placement of intrusion-detection systems, which are often dynamic and part of a ``moving target defense" strategy \cite{chowdhary2019adaptive,zhu2013game,zhuang2014towards}. Moreover, attackers' strategies must be unpredictable in an attempt to exploit defenses. Thus, the use of real-time resources in our model represents such dynamic and unpredictable strategies. A full summary of our contributions is provided below.

\smallskip \noindent \textbf{Our Contributions:} Our main contribution in this paper is a full characterization of equilibrium strategies and payoffs to both players in the aforementioned two-stage General Lotto game (Theorem \ref{thm:equilibrium_characterization}). By characterizing these optimal reinforcement strategies, we are able to provide Pareto frontiers for player $A$ as one balances a combination of real-time and pre-allocated resources (Lemma \ref{lem:level_set}).  Interestingly, Theorem \ref{thm:ratio} demonstrates that real-time resources are at least twice as effective as pre-allocated resources when considering the equilibrium payoff of player $A$.


Our second set of results in this manuscript focus on the optimal investment levels of pre-allocated and real-time resources. Rather than player $A$ being equipped with a fixed budget of resources $(P,R_A)$, we rather consider a setting where player $A$ has a monetary budget $M_A$ and each type of resource is associated with a given per-unit cost. Building upon the above characterization of the optimal reinforcement strategies in Theorem~\ref{thm:equilibrium_characterization}, in Theorem~\ref{thm:investment} we characterize the optimal investment strategies for this per-unit cost variant of the two-stage General Lotto game.  This  provides an understanding of the precise combination of pre-allocated and real-time resources that optimize player $A$'s equilibrium payoff.


Our last contribution focuses on a variant of this General Lotto game where both players can employ pre-allocated resources.  In particular, we consider a scenario where player $B$ is able to respond to player $A$'s pre-allocation with its own pre-allocated resources, before engaging in the final-stage General Lotto game. This is formulated as a Stackelberg game, where both players have  monetary budgets $M_A,M_B$ and per-unit costs for investing in the two types of resources. We fully characterize the Stackelberg equilibrium (Proposition \ref{prop:stack_equil}), which highlights that having the opportunity to respond to an opponent's early investments can significantly improve one's eventual performance.


\smallskip \noindent \textbf{Related works:} This manuscript takes steps towards understanding the competitive  allocation of resources in multi-stage scenarios. There is widespread interest in this research objective that involves the analysis of zero-sum games \cite{Nayyar_2013,Kartik_2021asymmetric,Li_2020}, differential or repeated games \cite{Isaacs_1965,VonMoll_2020}, and Colonel Blotto games \cite{Vu_2019combinatorial,Aidt_2019,Leon_2021,shishika2022dynamic}. The goal of many of these works is to develop tools to compute decision-making policies for agents in adversarial and uncertain environments. In comparison, our work provides explicit, analytical characterizations of equilibrium strategies, which draws sharper insights that relate the players' performance with the various elements of adversarial interaction. As such, our work is related to a recent research thread in which allocation decisions are made over multiple stages \cite{Kovenock_2012,Gupta_2014a,Gupta_2014b,Leon_2021,Vu_2019combinatorial,Paarporn2021strategically,shishika2022dynamic,chandan2023art}. 

Our work also draws significantly from the primary literature on Colonel Blotto and General Lotto games \cite{Gross_1950,Roberson_2006,Kovenock_2020,Vu_EC2021}. In particular, the simultaneous-move subgame played in the final stage of our formulations was first proposed by Vu and Loiseau \cite{Vu_EC2021}, and is known as the \emph{General Lotto game with favoritism} (GL-F). Favoritism refers to the fact that pre-allocated resources provide an incumbency advantage to one player's competitive chances. Their work establishes existence of equilibria and develops computational methods to calculate them to arbitrary precision. However, this prior work considers pre-allocated resources as exogenous parameters of the game. In contrast, we model the deployment of pre-allocated resources as a strategic element of the competitive interaction.  Furthermore, we provide the first analytical characterizations of equilibria and the corresponding payoffs in GL-F games. 

\section{Problem formulation}\label{sec:model}

\begin{figure*}[t]
    \centering
    \includegraphics[scale=0.25]{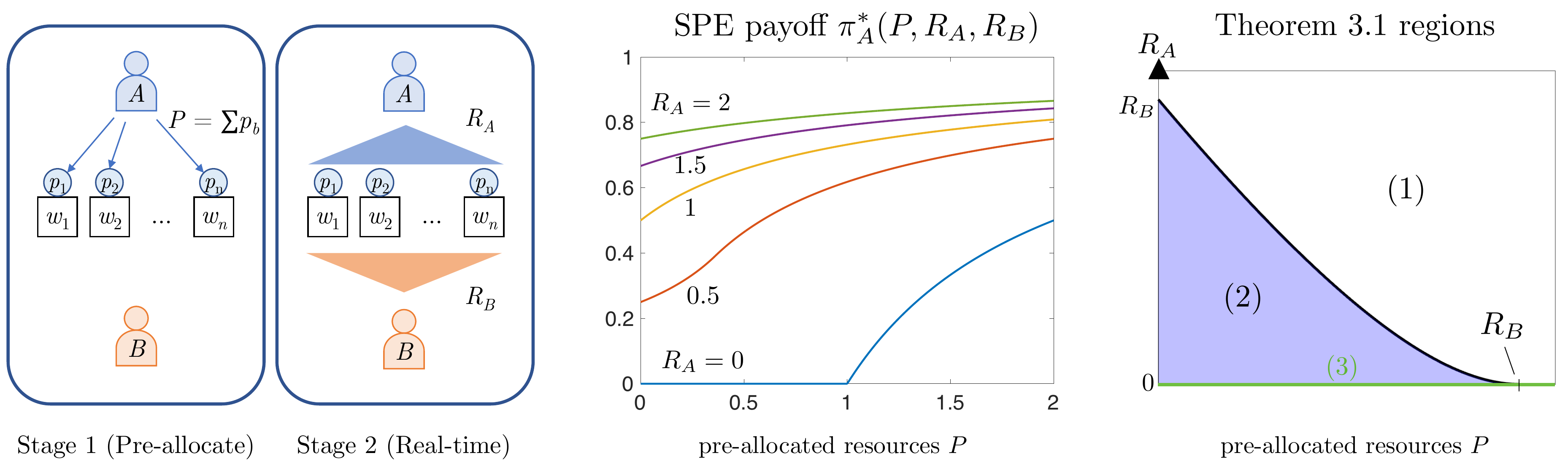}
    \caption{(Left) The two-stage General Lotto game with Pre-allocations (GL-P). Players $A$ and $B$ compete over $n$ battlefields, whose valuations are given by $\{w_b\}_{b=1}^n$. In Stage 1, player $A$ decides how to deploy $P$ pre-allocated resources to the battlefields. Player $B$ observes the deployment. In Stage 2, the players simultaneously decide how to deploy their real-time resources $R_A$ and $R_B$ and final payoffs are determined. (Center) This plot shows the SPE payoff to player $A$ under varying resource endowments (Theorem \ref{thm:equilibrium_characterization}). Obtaining more pre-allocated resources improves the payoff with decreasing marginal returns. Here, we have fixed $R_B = 1$. (Right) The characterization of the SPE payoff is broken down into three separate cases in the game's parameters. These are shown as the three regions in this plot, here parameterized by $P$ and $R_A$, which correspond to the items in Theorem \ref{thm:equilibrium_characterization}. }
    \label{fig:GLP_figure}
\end{figure*}

The \emph{General Lotto game with pre-allocations} (GL-P) is a two-stage game with players $A$ and $B$, who compete over a set of $n$ battlefields, denoted as $\mcal{B} = \{1,\ldots,n\}$. Each battlefield $b\in\mcal{B}$ is associated with a known valuation $w_b > 0$, which is common to both players. Player $A$ is endowed with a pre-allocated resource budget $P > 0$ and a real-time resource budget $R_A>0$. Player $B$ is endowed with a  budget $R_B > 0$ of real-time resource, but no pre-allocated resources. The two stages are played as follows:

\smallskip \noindent -- \emph{Stage 1 (pre-allocation):} Player $A$ decides how to allocate her $P$ pre-allocated resources to the battlefields, i.e., it selects a vector $\bs{p} = (p_1,\ldots,p_n) \in \Delta_n(P) := \{\bs{p}' \in \mbb{R}_+^n : \|\bs{p}'\|_1 = P \}$. We term the vector $\bs{p}$ as player $A$'s \emph{pre-allocation profile}. No payoffs are derived in Stage 1, and $A$'s choice $\bs{p}$ becomes binding and common knowledge.

\smallskip \noindent -- \emph{Stage 2 (decisive point of conflict):} Players $A$ and $B$ then compete in a simultaneous-move sub-game with their real-time resource budgets $R_A$, $R_B$. Here, both players can randomly allocate these resources as long as their expenditure does not exceed their budgets in expectation. Specifically, a strategy for player $i \in \{A,B\}$ is an $n$-variate (cumulative) distribution $F_i$ over allocations $\bs{x}_i \in \mbb{R}_+^n$ that satisfies
\begin{equation}\label{eq:lotto_budget_constraint}
    \mbb{E}_{\bs{x}_i \sim F_i} \left[ \sum_{b\in\mcal{B}} x_{i,b} \right] \leq R_i.
\end{equation}

\noindent We use $\mcal{L}(R_i)$ to denote the set of all strategies $F_i$ that satisfy \eqref{eq:lotto_budget_constraint}.  Given that player $A$ chose $\bs{p}$ in Stage 1, the expected payoff to player $A$ is given by
\begin{equation}
    U_A(\bs{p},F_A,F_B) := \mbb{E}_{\substack{\bs{x}_A \sim F_A \\ \bs{x}_B \sim F_B}}\left[ \sum_{b\in\mcal{B}} w_b \! \cdot \! \mathds{1}\{ x_{A,b} + p_b \geq qx_{B,b} \} \right]
\end{equation}
where $\mathds{1}\{\cdot\}$ is the usual indicator function taking a value of 1 or 0.\footnote{The tie-breaking rule (i.e., deciding who wins if $x_{A,b}+p_b=x_{B,b}$) can be assumed to be arbitrary, without affecting any of our results. This property is common in the General Lotto literature, see, e.g., \cite{Kovenock_2020,Vu_EC2021}.}  In words, player $B$ must overcome player $A$'s pre-allocated resources $p_b$ as well as player $A$'s allocation of real-time resources $x_{A,b}$ in order to win battlefield $b$.  The parameter $q > 0$ is the relative quality of player $B$'s real-time resources against player $A$'s resources. For simpler exposition, we will simply set $q = 1$, noting that all of our results are easily attained for any other value of $q$. The payoff to player $B$ is $U_B(\bs{p},F_A,F_B) = 1 - U_A(\bs{p},F_A,F_B)$, where we assume without loss of generality that  $\sum_{b\in\mcal{B}} w_b = 1$.

\smallskip Stages 1 and 2 of GL-P are illustrated in Figure \ref{fig:GLP_figure}.  We specify an instantiation of the game as $\text{GL-P}(P,R_A,R_B,\bs{w})$. We focus on the subgame-perfect equilibrium solution concept.


\begin{definition}\label{def:SPE}
    A profile $(\bs{p}^*,F_A^*(\bs{p}),F_B^*(\bs{p}))$ where $\bs{p}^* \in \Delta_n(P)$ and $F_i^*(\bs{p}) : \Delta_n(P) \rightarrow \mcal{L}(R_i)$, for $i=A,B$, is a \emph{subgame-perfect equilibrium (SPE)} if the following conditions hold.
    \begin{itemize}
        \item For any $\bs{p} \in \Delta_n(P)$, $(F_A^*(\bs{p}),F_B^*(\bs{p}))$ constitutes a Nash equilibrium of the Stage 2 subgame:
        \begin{equation}
            \begin{aligned}
                U_A(\bs{p},F_A^*(\bs{p}),F_B^*(\bs{p})) &\geq U_A(\bs{p},F_A,F_B^*(\bs{p})) \\
                \text{and } U_B(\bs{p},F_A^*(\bs{p}),F_B^*(\bs{p})) &\geq U_B(\bs{p},F_A^*(\bs{p}),F_B)
            \end{aligned}
        \end{equation}
        for any $F_A \in \mcal{L}(R_A)$ and $F_B \in \mcal{L}(R_B)$.
        \item The pre-allocation $\bs{p}^*$ satisfies
        \begin{equation}
            U_A(\bs{p}^*,F_A^*(\bs{p}^*),F_B^*(\bs{p}^*)) \geq U_A(\bs{p},F_A^*(\bs{p}),F_B^*(\bs{p}))
        \end{equation}
        for any $\bs{p} \in \Delta_n(P)$.
    \end{itemize}

\end{definition}


In an SPE, the players select their Stage 2 strategies conditioned on player $A$'s choice of pre-allocation $\bs{p}$ in Stage 1, such that $F_A^*(\bs{p}), F_B^*(\bs{p})$ forms a Nash equilibrium of the one-shot subgame of Stage 2. We stress the importance of the common knowledge assumption for the pre-allocation choice $\bs{p}$ before Stage 2 -- over time, an opponent is likely to learn the placement of past resources and would be able to exploit this knowledge at a later point in time. The second condition in the above definition asserts that player $A$'s SPE pre-allocation $\bs{p}^*$ in Stage 1 optimizes its equilibrium payoff in the subsequent Stage 2 subgame. 

We remark that the Stage 2 subgame has been studied in the recent literature, where it is termed a \emph{General Lotto game with Favoritism} \cite{Vu_EC2021}. We denote it as $\text{GL-F}(\bs{p},R_A,R_B)$. There, a pre-allocation vector $\bs{p}$ is viewed as an exogenous fixed parameter, whereas in our GL-P formulation, it is an endogenous strategic choice. It is established in \cite{Vu_EC2021} that a Nash equilibrium exists and its payoffs are unique for any instance of $\text{GL-F}(\bs{p},R_A,R_B)$. Consequently, the players' SPE payoffs in our GL-P game are necessarily unique. We will denote $\pi^*_i(P,R_A,R_B) := U_i(\bs{p}^*,F_A^*(\bs{p}^*),F_B^*(\bs{p}^*))$, $i\in\{A,B\}$, as the players' payoffs in an SPE when the dependence on the vector $\bs{w}$ is clear.

While \cite{Vu_EC2021} provided numerical techniques to compute an equilibrium of $\text{GL-F}(\bs{p},R_A,R_B)$ to arbitrary precision, analytical characterizations of them (e.g. closed-form expressions) were not provided. In the next section, we develop techniques to derive such characterizations, as they are required to precisely express the SPE  of the GL-P game.

%


\section{Equilibrium characterizations}\label{sec:results}

In this section, we present our main results regarding the characterization of players' SPE payoffs in the GL-P game. These results highlight the relative effectiveness of pre-allocated vs real-time resources.


\subsection{Main results}
The result below 
provides an explicit characterization of the players' payoffs in an SPE of the two-stage GL-P game.


\begin{theorem}\label{thm:equilibrium_characterization}
    Consider the game $\text{GL-P}(P,R_A,R_B,\bs{w})$.  Player $A$'s payoff $\pi^*_A(P,R_A,R_B)$ in a SPE is given as follows:
    
    \begin{enumerate}
        \item If $R_B \leq P$, or $R_B > P$ and $R_A \geq \frac{2(R_B-P)^2}{P+2(R_B-P)}$, then $\pi^*_A(P,R_A,R_B)$ is
        \begin{equation} \label{eq:allcase1_payoff}
            1 - \frac{R_B}{2R_A}\left(\frac{R_A + \sqrt{R_A(R_A+2P)}}{P+R_A+\sqrt{R_A(R_A+2P)}} \right)^2 .
        \end{equation}
        \item If $R_B > P$ and $0 < R_A < \frac{2(R_B-P)^2}{P+2(R_B-P)}$, then $\pi^*_A(P,R_A,R_B)$ is
        \begin{equation} \label{eq:allcase2_payoff}
            \frac{R_A}{2(qR_B-P)}.
        \end{equation}
        \item If $R_A = 0$, then $\pi^*_A(P,R_A,R_B)$ is
        \begin{equation}\label{eq:allcase3_payoff}
            \left(1 - \min\left\{\frac{R_B}{P},1\right\} \right)
        \end{equation}
    \end{enumerate}
    Player $B$'s SPE payoff is given by $\pi^*_B(P,R_A,R_B) = 1 - \pi^*_A(P,R_A,R_B)$. In all instances, player $A$'s SPE pre-allocation is $\bs{p}^* = \bs{w}\cdot P$.
\end{theorem}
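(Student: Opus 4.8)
The plan is to solve the game by backward induction, consistent with the SPE solution concept. First I would fully characterize the Nash equilibrium \emph{payoff} of the Stage-2 subgame $\text{GL-F}(\bs{p},R_A,R_B)$ as a function of the pre-allocation profile $\bs{p}$, and only then optimize this payoff over $\bs{p}\in\Delta_n(P)$ to extract the SPE. Since \cite{Vu_EC2021} guarantees that the GL-F payoff is unique, it suffices to exhibit one equilibrium of each subgame and to identify the maximizing $\bs{p}$; the tie-breaking footnote lets me ignore ties throughout.

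For the subgame analysis I would exploit the defining feature of General Lotto (as opposed to Blotto): because the budget constraint \eqref{eq:lotto_budget_constraint} is only in expectation, a strategy is completely determined by its $n$ one-dimensional marginals, and the game decouples across battlefields except through the two scalar budget constraints. I would therefore introduce Lagrange multipliers $\lambda,\mu$ for the budgets of $A$ and $B$ and write the per-battlefield indifference conditions: on the support of $A$'s marginal on battlefield $b$ the map $t\mapsto w_bF_{B,b}(t+p_b)-\lambda t$ must be constant, and on the support of $B$'s marginal the map $t\mapsto w_bF_{A,b}(t-p_b)-\mu t$ must be constant. Solving these forces the competitive part of each marginal to be uniform, with $A$ placing an atom of size $\mu p_b/w_b$ at the origin and $B$ placing a conceding atom at the origin whenever it can afford to do so. Matching the uniform densities, the normalization $F_{A,b}(m_b)=1$, and the two budget equations then pins down the supports and atom sizes in closed form.

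Specializing to the proportional profile $\bs{p}^*=\bs{w}P$ makes the subgame self-similar: every battlefield is a scaled copy of a single normalized battlefield of value $1$ with head start $P$ and budgets $R_A,R_B$, so the aggregate equilibrium payoff equals that of this one-battlefield problem. Carrying out the above computation on the single battlefield yields the upper support $m=R_A+\sqrt{R_A(R_A+2P)}$ and $A$-atom $P/(P+m)$; computing $B$'s winning probability then reproduces \eqref{eq:allcase1_payoff} precisely when $B$'s conceding atom is nonnegative, i.e. when $m\ge 2(R_B-P)$, which I would show is algebraically identical to the stated threshold $R_A\ge \tfrac{2(R_B-P)^2}{P+2(R_B-P)}$ (and automatic when $R_B\le P$). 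When this fails, $B$ can no longer afford to concede, so it plays a gapless uniform marginal supported on $[p_b,p_b+m_b]$ with $m=2(R_B-P)$, while $A$ carries a larger atom at the origin; this regime gives \eqref{eq:allcase2_payoff}, and one checks that the two formulas agree at the threshold, yielding continuity. The degenerate case $R_A=0$ reduces to a deterministic head start $p_b=w_bP$ against which $B$ optimally buys the cheapest value-per-cost wins, giving \eqref{eq:allcase3_payoff}.

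The remaining and most delicate step is to prove that $\bs{p}^*=\bs{w}P$ is in fact optimal in Stage~1, i.e. that no unbalanced profile can raise $A$'s subgame payoff. Here I would extend the indifference computation above to an arbitrary $\bs{p}$, obtaining $A$'s subgame value as an explicit function of $(p_1,\dots,p_n)$ coupled through the shared multipliers $\lambda,\mu$, and then argue that this value is Schur-concave in $\bs{p}$ relative to the weights $\bs{w}$ --- equivalently, that balancing the head starts in proportion to battlefield values can only help $A$, whereas any imbalance is exploited by $B$'s best response. I expect this majorization/KKT argument over the simplex $\Delta_n(P)$, together with verifying that the coupled budget equations admit a consistent solution for general $\bs{p}$, to be the main obstacle, since the coupling through $\lambda,\mu$ obstructs a naive battlefield-by-battlefield optimization.
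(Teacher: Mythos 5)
Your backward-induction skeleton matches the paper's, and your Stage-2 analysis is essentially a from-scratch rederivation of the Vu--Loiseau characterization that the paper simply imports as Lemma \ref{lem:SOE} (their $\kappa_A,\kappa_B$ in \eqref{eq:SOE} are exactly your Lagrange multipliers). Your closed forms at the proportional profile are correct: with $m = R_A + \sqrt{R_A(R_A+2P)}$ one checks $R_A + \frac{PR_A}{m} = \frac{m}{2}$, so your condition $m \geq 2(R_B-P)$ is algebraically identical to the paper's condition \eqref{eq:case2_cond3} and to the stated threshold $R_A \geq \frac{2(R_B-P)^2}{P+2(R_B-P)}$, and your Case-2 and $R_A=0$ payoffs agree with \eqref{eq:allcase2_payoff} and \eqref{eq:allcase3_payoff}. (One minor gloss: indifference conditions alone give candidate strategies, not equilibria; a best-response verification is needed, though you may lean on the equilibrium existence and payoff-uniqueness results of the cited prior work, as the paper does.)

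The genuine gap is the step you yourself flag as the ``main obstacle'': the Stage-1 optimality of $\bs{p}^* = \bs{w}\cdot P$, i.e.\ that it solves \eqref{eq:p_optimization}. You announce a Schur-concavity/majorization plan but do not carry it out, and this is precisely where the paper spends most of its effort (Lemma \ref{lem:global_maximizer} plus the appendix Part 2-b). The difficulty is structural: the subgame payoff is \emph{not} concave in $\bs{p}$ on $\Delta_n(P)$ (the paper states this explicitly), it is only implicitly defined through the coupled system \eqref{eq:SOE} whenever the battlefield partition is mixed ($\mcal{B}_1, \mcal{B}_2 \neq \varnothing$), and its functional form switches across regime boundaries. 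Moreover, in the all-Case-2 region the payoff is \emph{constant} in $\bs{p}$, so any strict ``balancing can only help'' statement is false there, and a Schur-concavity claim would at best hold weakly and regime-by-regime --- nothing in your sketch verifies it across the regime boundaries, which is where the paper finds the interior critical ratio $\bar{p}$ and must show it sits exactly on the Case-2 boundary with payoff equal to $\pi_A(\bs{p}^*,R_A,R_B)$, before separately resolving the faces of the simplex. The paper's actual mechanism is different from yours: it solves \eqref{eq:SOE} in closed form in each regime and shows the directional derivative $(\bs{p}-\bs{w}P)^\top \nabla \pi_A \geq 0$ in the all-$\mcal{B}_1$ region, then performs an explicit critical-point and boundary analysis in the mixed regime using the solution \eqref{eq:mixedcases_solution}. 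Until you supply an argument of comparable force for the mixed regime, the theorem's central claim --- that the proportional pre-allocation is SPE-optimal, from which all three payoff formulas follow --- remains unproven in your proposal.
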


A visualization of the parameter regimes of the three cases above is shown in the right Figure \ref{fig:GLP_figure}. Note that the standard General Lotto game (without pre-allocations, \cite{Hart_2008}) is included as the vertical axis at $P=0$. An illustration of the SPE payoffs to player $A$ is shown in the center plot of Figure \ref{fig:GLP_figure}. We notice that given sufficiently high amount of pre-allocated resources (i.e. $P > R_B$), player $A$ can attain a positive payoff even without any real-time resources ($R_A=0$). For $P < R_B$, player $A$ receives zero payoff since player $B$ can simply exceed the pre-allocation on every battlefield. Observe that the SPE payoff $\pi_A^*(P,R_A,R_B)$ exhibits diminishing marginal returns in $R_A$ and in $P$ for larger values of $P$, but is not in general a concave function in $P$ -- see the $R_A = 0.5$ curve, which has an inflection.

\noindent\textbf{Proof approach and outline:} The derivation of the SPE payoffs in Theorem \ref{thm:equilibrium_characterization} follows a backwards induction approach. First, for any fixed pre-allocation vector $\bs{p}$, one characterizes the equilibrium payoff of the Stage 2 sub-game $\text{GL-F}(\bs{p},R_A,R_B)$. We denote this payoff as
\begin{equation}
    \pi_A(\bs{p},R_A,R_B) := U_A(\bs{p},F_A^*(\bs{p}),F_B^*(\bs{p}))
\end{equation}
where $F_A^*(\cdot),F_B^*(\cdot)$ satisfies the first condition of Definition \ref{def:SPE}. Then, the SPE payoff is calculated by solving the following optimization problem,
\begin{equation}\label{eq:p_optimization}
    \pi_A^*(P,R_A,R_B) = \max_{\bs{p}\in\Delta_n(P)} \pi_A(\bs{p},R_A,R_B).
\end{equation}


The following proof outline is taken to derive the SPE strategies and payoffs.

\vspace{1mm}

\noindent\textbf{\underline{Part 1}:} We first detail analytical methods to derive the equilibrium payoff $\pi_A(\bs{p},R_A,R_B)$ to the second stage subgame $\text{GL-F}(\bs{p},R_A,R_B)$. 

\vspace{1mm}

\noindent\textbf{\underline{Part 2}:} We show that  $\bs{p}^* = \bs{w}\cdot P$ is an SPE pre-allocation, i.e. it solves the optimization problem \eqref{eq:p_optimization}.

\vspace{1mm}

\noindent\textbf{\underline{Part 3}:} We derive the analytical expressions for $\pi_A^*$ reported in Theorem \ref{thm:equilibrium_characterization}.

Each one of the three parts has a corresponding Lemma that we present in the following subsection.

\subsection{Proof of Theorem \ref{thm:equilibrium_characterization}}

\noindent\textbf{\underline{Part 1:}} The recent work of Vu and Loiseau \cite{Vu_EC2021} provides a method to derive an equilibrium of the  General Lotto game with Favoritism $\text{GL-F}(\bs{p},R_A,R_B)$.  This method involves solving the following system\footnote{The problem settings considered in \cite{Vu_EC2021} are more general, which considers two-sided favoritism (i.e. $p_b < 0$ for some $b$). However, exact closed-form solutions were not provided. The paper \cite{Vu_EC2021} provided computational approaches to calculate an equilibrium to arbitrary precision.} of two equations for two unknown variables $(\kappa_A,\kappa_B) \in \mbb{R}_{++}^2$: 
\begin{equation}\label{eq:SOE}
    \begin{aligned}
        R_A \!=\! \sum_{b=1}^n \frac{[h_b(\kappa_A,\kappa_B) - p_b]^2}{2w_b\kappa_B}, \ 
        R_B \!= \sum_{b=1}^n \frac{h_b^2(\kappa_A,\kappa_B) - p_b^2}{2w_b\kappa_A}
    \end{aligned}
\end{equation}
where $h_b(\kappa_A,\kappa_B) := \min\{  w_b\kappa_B, w_b\kappa_A + p_b\}$ for $b \in \mcal{B}$. The two equations above correspond to the expected budget constraint \eqref{eq:lotto_budget_constraint} for both players. There always exists a solution $(\kappa_A^*,\kappa_B^*) \in \mbb{R}_{++}^2$ to this system \cite{Vu_EC2021}, which allows one to calculate the following equilibrium payoffs. 

\begin{lemma}[Adapted from \cite{Vu_EC2021}]\label{lem:SOE}
    Suppose $(\kappa_A^*,\kappa_B^*) \in \mbb{R}_{++}^2$ solves \eqref{eq:SOE}.  Let $\mcal{B}_1 := \{b\in\mcal{B} : h_b(\kappa_A^*,\kappa_B^*) = w_b\kappa_B\}$ and $\mcal{B}_2 = \mcal{B}\backslash \mcal{B}_1$.
    Then there is a Nash equilibrium $(F_A^*,F_B^*)$ of $\text{GL-F}(\bs{p},R_A,R_B)$ where player $A$'s equilibrium payoff is given by 
    \begin{equation}\label{eq:playerA_payoff}
        \begin{aligned}
            \pi_A(\bs{p},R_A,R_B) &= \sum_{b\in\mcal{B}_1} \!w_b\!\left[1 - \frac{\kappa_B^*}{2\kappa_A^*}\left(1 - \frac{p_i^2}{(w_b\kappa_B)^2} \right) \right] \\
            &\quad + \sum_{b\in\mcal{B}_2} w_b \frac{\kappa_A^*}{2\kappa_B^*}
        \end{aligned}
    \end{equation}
    and the equilibrium payoff to player $B$ is $\pi_B(\bs{p},R_A,R_B) = 1 - \pi_A(\bs{p},R_A,R_B)$.
    
\end{lemma}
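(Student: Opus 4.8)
The plan is to reduce the $n$-battlefield game to $n$ independent single-battlefield problems, solve each in closed form, and then sum the resulting per-battlefield win probabilities. The decoupling is what makes this tractable: since each player's budget enters only through the \emph{expected} constraint \eqref{eq:lotto_budget_constraint}, a Lagrangian relaxation prices player $i$'s resource by a single scalar and leaves each player free to choose its marginal allocation distribution on each battlefield independently. These scalars are exactly the variables $(\kappa_A,\kappa_B)$ appearing in \eqref{eq:SOE}, and I would take from \cite{Vu_EC2021} both the fact that the relaxed problem decouples this way and the existence of a solution $(\kappa_A^*,\kappa_B^*)\in\mbb{R}_{++}^2$ to the budget-feasibility system \eqref{eq:SOE}.

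First I would record the equilibrium marginals on a single battlefield $b$. Writing the effective allocation of player $A$ as $y_A = x_{A,b}+p_b$ and that of $B$ as $y_B = x_{B,b}$, player $A$ wins iff $y_A \ge y_B$. Indifference of each player over the interior of its support forces the opponent's marginal to be uniform there, and the constant densities work out to $\tfrac{1}{w_b\kappa_A^*}$ for $y_B$ (from $A$'s indifference) and $\tfrac{1}{w_b\kappa_B^*}$ for $y_A$ (from $B$'s indifference). Both marginals are therefore uniform on a common support with upper endpoint $h_b = \min\{w_b\kappa_B^*,\,w_b\kappa_A^*+p_b\}$, possibly carrying atoms at their lower endpoints whose masses are pinned down by the requirement that each be a probability measure. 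Integrating the allocations $x_{A,b},x_{B,b}$ against these marginals reproduces the two summands in \eqref{eq:SOE}, which both certifies that the budget constraints bind at $(\kappa_A^*,\kappa_B^*)$ and fixes the atom masses.

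Next I would compute player $A$'s battlefield win probability $\Pr[y_A\ge y_B]$ directly, splitting on the two regimes. On $b\in\mcal{B}_2$, where $h_b = w_b\kappa_A^*+p_b < w_b\kappa_B^*$, player $B$ carries no atom (its uniform mass on $[p_b,h_b]$ already equals one), and a short double integration gives win probability $\kappa_A^*/(2\kappa_B^*)$. On $b\in\mcal{B}_1$, where $h_b=w_b\kappa_B^*$, the head start $p_b$ additionally lets $A$ win against the atom at the bottom of $B$'s support; carrying out the analogous integration and using that $A$'s total marginal mass is one to collapse the atom terms yields $1 - \tfrac{\kappa_B^*}{2\kappa_A^*}\big(1 - p_b^2/(w_b\kappa_B^*)^2\big)$. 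Weighting by $w_b$ and summing over $\mcal{B}_1\cup\mcal{B}_2$ gives \eqref{eq:playerA_payoff}, and $\pi_B = 1-\pi_A$ follows because each battlefield is awarded to exactly one player and $\sum_{b\in\mcal{B}} w_b = 1$.

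The hard part will not be the algebra but the justification of the reduction and the boundary bookkeeping: one must argue that the per-battlefield mutual best responses assembled from the relaxation constitute a genuine Nash equilibrium of the full game, so that no joint deviation across battlefields is profitable -- this is where the tightness of the budget constraints and complementary slackness of the multipliers are used -- and one must track the atoms and the tie-breaking at the support endpoints carefully so that $\Pr[y_A \ge y_B]$ integrates correctly in each regime. For the decoupling and equilibrium-existence steps I would invoke \cite{Vu_EC2021}; the explicit evaluation of the two integrals and their assembly into \eqref{eq:playerA_payoff} is the remaining, self-contained content.
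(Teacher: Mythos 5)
Your proposal is correct, and the per-battlefield computations check out: taking $B$'s effective bid uniform with density $1/(w_b\kappa_A^*)$ on $(p_b,h_b]$ plus a possible atom at zero, and $A$'s effective bid uniform with density $1/(w_b\kappa_B^*)$ on $[p_b,h_b]$ plus a possible atom at $p_b$, the expected expenditures reproduce exactly the two equations of \eqref{eq:SOE}; the $\mcal{B}_2$ integration yields $\kappa_A^*/(2\kappa_B^*)$, and the $\mcal{B}_1$ integration yields $1-\tfrac{\kappa_B^*}{2\kappa_A^*}\bigl(1-p_b^2/(w_b\kappa_B^*)^2\bigr)$, which incidentally confirms that the ``$p_i$'' appearing in \eqref{eq:playerA_payoff} is a typo for $p_b$. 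The comparison with the paper is, however, somewhat degenerate: the paper supplies no proof of this lemma at all --- it is explicitly labeled as adapted from \cite{Vu_EC2021}, with only the system \eqref{eq:SOE} and the existence of a positive solution quoted from that reference. Your reconstruction --- Lagrangian pricing of the expected-budget constraint \eqref{eq:lotto_budget_constraint} to decouple the battlefields, indifference arguments forcing uniform marginals, direct evaluation of the two win-probability integrals, and delegation of existence and the no-joint-deviation verification to the citation --- is therefore strictly more informative than what the paper provides, and it is the standard route by which such payoff formulas are derived in the General Lotto literature. One structural point you slightly understate: given $h_b$, the atom masses are indeed pinned down by total mass one, but the identification of the common upper endpoint as $h_b=\min\{w_b\kappa_B^*,\,w_b\kappa_A^*+p_b\}$ itself requires an argument --- it equals $w_b\kappa_B^*$ precisely when $B$ retains an atom at zero and must be indifferent between conceding and bidding at the top of the support (net gain $w_b-\kappa_B^{*-1}w_b\kappa_B^*\cdot$(cost rate) vanishing there), and equals $w_b\kappa_A^*+p_b$ precisely when $B$'s uniform mass exhausts first --- so ``pinned down by the requirement that each be a probability measure'' covers the atoms but not the endpoint. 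Since you invoke \cite{Vu_EC2021} for the decoupling and boundary structure, and the lemma is itself imported from that work, this does not constitute a gap in your proposal.
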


Lemma \ref{lem:SOE} provides an expression for $\pi_A(\bs{p},R_A,R_B)$ in terms of a solution $(\kappa_A^*,\kappa_B^*)$ to the system of equations \eqref{eq:SOE}. However, in order to study the optimization \eqref{eq:p_optimization}, we need to be able to either find closed-form expressions for the solution  $(\kappa_A^*,\kappa_B^*)$ in terms of the defining game parameters $\bs{p},R_A,R_B,\bs{w}$, or establish certain properties about the payoff function \eqref{eq:playerA_payoff}, such as concavity in $\bs{p}$. Unfortunately, we find that this function is not generally concave for $\bs{p} \in \Delta_n(P)$. Our approach in Part 2 is to show that it is always increasing in the direction pointing to $\bs{p}^*$.

\noindent\textbf{\underline{Part 2:}} This part of the proof is devoted to showing that $\bs{p}^* = \bs{w}\cdot P$ is an SPE pre-allocation for player $A$. This divides the total pre-allocated resources $P$ among the battlefields proportionally to their values $w_b$, $b \in \mcal{B}$.

\begin{lemma}\label{lem:global_maximizer}
    The vector $\bs{p}^* = \bs{w}\cdot P$ is an SPE pre-allocation.
\end{lemma}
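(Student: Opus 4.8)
The plan is to establish the second (Stage-1) condition of Definition~\ref{def:SPE} by proving that the Stage-2 equilibrium payoff increases monotonically along every line segment terminating at $\bs{p}^{*}=\bs{w}P$. Fix any $\bs{p}\in\Delta_n(P)$, put $\bs{d}:=\bs{p}^{*}-\bs{p}$, and note that $\sum_{b} d_b = P-P = 0$, so $\bs{d}$ is tangent to $\Delta_n(P)$ and $\bs{p}(t):=\bs{p}+t\bs{d}$ stays feasible for $t\in[0,1]$ by convexity. Setting $g(t):=\pi_A(\bs{p}(t),R_A,R_B)$, it suffices to show $g'(t)\ge 0$ wherever the derivative exists, since then $\pi_A(\bs{p}^{*})=g(1)\ge g(0)=\pi_A(\bs{p})$, which is exactly the inequality required of an SPE pre-allocation. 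This star-shaped monotonicity is strictly weaker than concavity of $\pi_A$ in $\bs{p}$ (which fails in general) yet still delivers global optimality of $\bs{p}^{*}$.

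To differentiate $g$ I would first record regularity along the path. By \cite{Vu_EC2021} the system \eqref{eq:SOE} has, for each $\bs{p}(t)$, a unique solution $(\kappa_A^{*}(t),\kappa_B^{*}(t))\in\mbb{R}_{++}^2$; the implicit function theorem (checking that the Jacobian of \eqref{eq:SOE} in $(\kappa_A,\kappa_B)$ is nonsingular) shows this solution is continuous and piecewise $C^1$ in $t$, the only non-smooth instants being the finitely many $t$ at which some battlefield crosses the boundary $p_b=w_b(\kappa_B^{*}-\kappa_A^{*})$ separating $\mcal{B}_1$ from $\mcal{B}_2$; across these $g$ stays continuous and I would argue with one-sided derivatives. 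There are then two ways to obtain $g'$. The mechanical route differentiates \eqref{eq:playerA_payoff} by the chain rule together with the implicit derivatives $d\kappa_A^{*}/dt,\,d\kappa_B^{*}/dt$ extracted from \eqref{eq:SOE}. A cleaner conceptual route is an envelope (Danskin) argument: the Stage-2 subgame is constant-sum with $(F_A^{*},F_B^{*})$ a saddle point, and increasing any $p_b$ raises $U_A$ pointwise, so the value function satisfies $\partial \pi_A/\partial p_b = w_b\, f_{b}(p_b)\ge 0$, where $f_b$ is the equilibrium density of the margin $x_{B,b}-x_{A,b}$ at $p_b$. Either way, $g'(t)$ reduces to a weighted sum $\sum_b m_b\, d_b$, where $m_b\ge 0$ is the marginal value of pre-allocation on battlefield $b$ at $\bs{p}(t)$ and $\sum_b d_b = 0$.

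The crux is signing this sum. The structural fact I would exploit is that the marginal value is a common decreasing function of the normalized allocation $r_b:=p_b/w_b$: writing $d_b = w_b(P-r_b(0))$ and using that the linear path preserves the sign of $P-r_b(t)$, the balanced profile $\bs{p}^{*}$ is precisely the point where $r_b\equiv P$, so that at $\bs{p}^{*}$ every battlefield lies in a single regime ($\mcal{B}_1=\mcal{B}$ or $\mcal{B}_2=\mcal{B}$) and the marginal values are equalized across $b$, giving first-order stationarity. If $m_b=\phi(r_b)$ for a common decreasing $\phi$, then $m_b-\phi(P)$ and $d_b$ share the same sign battlefield-by-battlefield, whence $\sum_b m_b d_b = \sum_b (m_b-\phi(P))\,d_b + \phi(P)\sum_b d_b \ge 0$, a rearrangement (covariance) inequality yielding $g'(t)\ge 0$. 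Establishing this diminishing-returns monotonicity of the marginal value in $r_b$ — equivalently, that shifting mass from an over-allocated toward an under-allocated battlefield is never harmful once the equilibrium multipliers readjust — is the main obstacle, since it is exactly where the coupling of all battlefields through the global multipliers $(\kappa_A^{*},\kappa_B^{*})$ in \eqref{eq:SOE} must be controlled. Having verified $g'\ge 0$ on each smooth piece and matched one-sided derivatives at the regime-crossing instants, I would conclude $g(1)\ge g(0)$ for every starting $\bs{p}$, i.e. $\bs{p}^{*}=\bs{w}P$ solves \eqref{eq:p_optimization}.
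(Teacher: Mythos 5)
Your overall strategy -- show that the Stage-2 value increases along the segment from any $\bs{p}$ toward $\bs{p}^* = \bs{w}\cdot P$ -- is indeed the philosophy the paper announces ("show that it is always increasing in the direction pointing to $\bs{p}^*$"), but your proposal has a genuine gap, and you half-acknowledge it yourself: the entire argument funnels into the claim that the marginal value of pre-allocation is a \emph{single} decreasing function $\phi$ of the normalized level $r_b = p_b/w_b$, common to all battlefields once the multipliers $(\kappa_A^*,\kappa_B^*)$ readjust, and this claim is never established. Worse, it is not literally true in the form you need it. In the mixed regime ($\mcal{B}_1,\mcal{B}_2 \neq \varnothing$), solving \eqref{eq:SOE} explicitly (as the paper does in its appendix) yields marginals of genuinely different functional forms on the two sets: for $b\in\mcal{B}_1$ the partial derivative of $\pi_A$ is affine and decreasing in $p_b/w_b$ with a $1/\sqrt{H_1}$ prefactor (see \eqref{eq:mixed_partials1}), while for $c\in\mcal{B}_2$ it is \emph{constant} in $p_c$ with a $1/\sqrt{H_2}$ prefactor (see \eqref{eq:mixed_partials2}), where $H_1, H_2$ are distinct global quantities. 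So there is no common $\phi$; the needed cross-set comparison (that the $\mcal{B}_1$ marginal never exceeds the $\mcal{B}_2$ marginal in the relevant configuration) is exactly the hard content, and your covariance/rearrangement inequality presupposes it rather than proves it. Note also that your hoped-for strict monotonicity cannot hold as stated: the mixed-regime payoff admits interior critical points (at $p_b/w_b = \bar p$ in the paper's appendix), which the paper handles by showing they sit on the boundary of the all-$\mcal{B}_2$ region where the payoff equals the constant Case-2 value $\frac{R_A}{2(R_B-P)}$; any path argument must accommodate these flat directions and verify payoff equality there, which your sketch does not do.

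Two secondary steps are also asserted without support. First, your envelope (Danskin) step $\partial \pi_A/\partial p_b = w_b f_b(p_b)$ requires the equilibrium distribution of the margin $x_{B,b}-x_{A,b}$ to have a density at $p_b$; General Lotto equilibrium marginals carry atoms (e.g.\ at zero), so existence of this density, and the validity of the saddle-point envelope theorem over the infinite-dimensional strategy sets $\mcal{L}(R_i)$, both need justification. Second, your regularity chain (uniqueness of the solution to \eqref{eq:SOE}, nonsingularity of its Jacobian, finitely many regime crossings along the path) is taken for granted; \cite{Vu_EC2021} gives existence, not the smooth-selection properties you invoke. The paper sidesteps all of this with closed-form algebra: in the pure regime $\mcal{B}_1 = \mcal{B}$ the value collapses to $1 - 2R_B/(P+R_A+\sqrt{(P+R_A)^2 - \|\bs{p}\|_{\bs{w}}^2})$, a function of $\bs{p}$ only through $\|\bs{p}\|_{\bs{w}}^2$, which Cauchy--Schwarz minimizes at $\bs{p}^*$; in the pure regime $\mcal{B}_2=\mcal{B}$ the value is constant and the feasibility condition \eqref{eq:case3_mostrestrictive} is preserved by $\bs{p}^*$; and the mixed regime is eliminated by the explicit critical-point and boundary analysis described above. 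To repair your proposal you would essentially have to reproduce that explicit mixed-regime computation, at which point the path-monotonicity scaffolding adds little.
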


Equivalently, $\bs{p}^*$ solves the optimization problem \eqref{eq:p_optimization}. 
\begin{proof}
	The proof will follow two sub-parts, 2-a and 2-b. In part 2-a, we first establish that $\bs{p}^*$ is a local maximizer of $\pi_A(\bs{p},R_A,R_B)$, which necessarily occurs when either $\mcal{B}_1=\mcal{B}$ or $\mcal{B}_2=\mcal{B}$. In part 2-b, we show that no choice of $\bs{p} \in \Delta_n(P)$ that results in both sets $\mcal{B}_1$ and $\mcal{B}_2$ being non-empty achieves a higher payoff than $\pi_A(\bs{p}^*,R_A,R_B)$, thus establishing Lemma \ref{lem:global_maximizer}.
	
	\noindent\textbf{Part 2-a:} $\bs{p}^*$ \emph{is a local maximizer of} $\pi_A(\bs{p},R_A,R_B)$.
	
	From Lemma \ref{lem:SOE} and the definition of $h_b(\kappa_A,\kappa_B)$, we find that the solution to \eqref{eq:SOE} under the pre-allocation $\bs{p}^*$ is always in one of two completely symmetric cases: 1) $\mcal{B}_1 = \mcal{B}$; or 2) $\mcal{B}_2 = \mcal{B}$.  Thus, we need to show $\bs{p}^*$ is a local maximizer in both cases.

    \smallskip\noindent\underline{{\bf Case 1 ($\mcal{B}_1 = \mcal{B}$):}} For $\bs{p} \in \Delta_n(P)$, the system \eqref{eq:SOE} is written
    \begin{equation}\label{eq:case22_SOE}
        \begin{aligned}
            &R_A = \sum_{b=1}^n \frac{(w_b\kappa_B - p_b)^2}{2w_b\kappa_B} \text{ and } R_B = \sum_{b=1}^n \frac{(w_b\kappa_B)^2-p_b^2}{2w_b\kappa_A} \\
            &\text{where } 0 < w_b\kappa_B - p_b \leq \kappa_A \text{ holds } \forall b \in \mcal{B}.
        \end{aligned}
    \end{equation}
    It yields the algebraic solution
    \begin{equation} \label{eq:case1solution_closedform}
        \begin{aligned}
            \kappa_B^* &= P+R_A + \sqrt{(P+R_A)^2 - \|\bs{p}\|_{\bs{w}}^2} \\ 
            \kappa_A^* &= \frac{(P+R_A)\kappa_B^* - \|\bs{p}\|_{\bs{w}}^2}{R_B}.
        \end{aligned}
    \end{equation}
    where $\|\bs{p}\|_{\bs{w}}^2 := \sum_{b=1}^n \frac{p_b^2}{w_b}$. This solution needs to satisfy the set of conditions $0 < w_b\kappa_B - p_b \leq \kappa_A \ \forall b \in \mcal{B}$, but the explicit characterization of these conditions is not needed to show that $\bs{p}^*$ is a local maximum. Indeed, first observe that the expression for $\kappa_B^*$ is required to be real-valued, which we can write as the condition
    \begin{equation}
        \bs{p} \in R^{(1n)} := \left\{\bs{p} \in \Delta_n(P) : \|\bs{p}\|_{\bs{w}}^2 < (P+R_A)^2 \right\}.
    \end{equation}
    We thus have a region $R^{(1n)}$ for which player $A$'s equilibrium payoff (Lemma \ref{lem:SOE}) is given by the expression
    \begin{equation}\label{eq:case22_payoff}
        \pi_A^{(1n)}(\bs{p}) := 1 - \frac{R_B}{f(\|\bs{p}\|_{\bs{w}})}\!\left(1 - \frac{\|\bs{p}\|_{\bs{w}}^2}{(P\!+\!R_A\!+\!f(\|\bs{p}\|_{\bs{w}}))^2} \right)
    \end{equation}
    where $f(\|\bs{p}\|_{\bs{w}}) := \sqrt{(P+R_A)^2 - \|\bs{p}\|_{\bs{w}}^2}$. The partial derivatives are calculated to be
    \begin{equation}
        \frac{\partial \pi_A^{(1n)}}{\partial p_b}(\bs{p}) = \frac{p_b}{w_b} \cdot\frac{2R_B }{f(\|\bs{p}\|_{\bs{w}})(P+R_A+f(\|\bs{p}\|_{\bs{w}}))^2}
    \end{equation}
    A critical point of $\pi_A^{(1n)}$ must satisfy $\bs{z}^\top \nabla \pi_A^{(1n)}(\bs{p})=0$ for any $\bs{z} \in \mbb{T}_n$, where we define $\mbb{T}_n := \{\bs{z} \in \mbb{R}^n : \sum_{b=1}^n z_b = 0\}$ as the tangent space of $\Delta_n(P)$. Indeed for any $\bs{p} \in R^{(1n)}$, we calculate
    \begin{equation}
        \begin{aligned}
            (\bs{p} -\bs{w}\cdot P)^\top \nabla \pi_A^{(1n)}(\bs{p}) &= g(\|\bs{p}\|_{\bs{w}})\cdot \left(\|\bs{p}\|_{\bs{w}}^2 - P^2 \right) \\
            &\geq 0
        \end{aligned}
    \end{equation}
    where $g(\|\bs{p}\|_{\bs{w}}):=\frac{2R_B }{f(\|\bs{p}\|_{\bs{w}})(P+R_A+f(\|\bs{p}\|_{\bs{w}}))^2} > 0$ for any $\bs{p} \in R^{(1n)}$. The inequality above is met with equality if and only if $\bs{p} = \bs{p}^*$. This is due to the fact that $\min_{\bs{p} \in \Delta_n(P)} \|\bs{p}\|_{\bs{w}}^2 = \|\bs{p}^*\|_{\bs{w}}^2 = P^2$. Thus, $\bs{p}^*$ is the unique maximizer of $\pi_A^{(1n)}(\bs{p})$ on $R^{(1n)}$.
    
    \smallskip\noindent\underline{{\bf Case 2 ($\mcal{B}_2 = \mcal{B}$):}}
    For $\bs{p}\in\Delta_n(P)$, the system is written as
    \[ R_A = \sum^n_{b=1} \frac{(w_b \kappa_A)^2}{2 w_b \kappa_B} \text{ and } 
       R_B = \sum^n_{b=1} \frac{(w_b \kappa_A-p_b)^2-(p_b)^2}{2 w_b \kappa_A}, \]
    where $w_b \kappa_B - p_b > w_b \kappa_A$ holds for all $b\in\mcal{B}$.  This readily yields the algebraic solution:
    \begin{equation} \label{eq:case2solution_closedform}
        \kappa_B^* = 2 \frac{( R_B-P)^2}{R_A} \text{ and } 
        \kappa_A^* = 2 (R_B-P).
    \end{equation}
    For this solution to be valid, the following conditions are required:
    
        \noindent $\bullet$ $\kappa_A^*,\kappa_B^* \in \mbb{R}_{++}$:  This requires that $R_B-P>0$.
        
        \noindent $\bullet$ $w_b \kappa_B^* - p_b > w_b \kappa_A^*$ for all $b\in\mcal{B}$:  This requires that
        \[ 2 \frac{(R_B-P)^2}{R_A} - 2 (R_B-P) - \max_{b} \{\frac{p_b}{w_b}\} > 0. \]
        The left-hand side is quadratic in $R_B-P$, and thus requires that either
        \[\begin{aligned}
            R_B-P < \frac{R_A}{2}\left(1 - \sqrt{1+\frac{2}{ R_A}\max_b\{\frac{p_b}{w_b}\}} \right)
        \end{aligned} \]
        or
        \begin{equation} \label{eq:case3_mostrestrictive}
            R_B-P > \frac{R_A}{2}\left(1 + \sqrt{1+\frac{2}{ R_A}\max_b\{\frac{p_b}{w_b}\}} \right).
        \end{equation}
        The former cannot hold since the numerator on the right-hand side is strictly negative, but $\kappa_A^*,\kappa_B^* \in \mbb{R}_{++}$ requires $ R_B-P>0$.  Thus, \eqref{eq:case3_mostrestrictive} must hold, Clearly, this is more restrictive than $R_B-P>0$. This dictates the boundary of Case 2.
    
    For any $\bs{p}\in\Delta_n(P)$ such that all battlefields are in Case 2, the expression for player $A$'s payoff in \eqref{eq:playerA_payoff} simplifies to
    \[ \pi_A(\bs{p},R_A,R_B) = \sum^n_{b=1} w_b \frac{\kappa_A^*}{2 \kappa_B^*} 
        = \frac{R_A}{2(R_B-P)}, \]
    where we use the expression for $\kappa_B^*$ and $\kappa_A^*$ in \eqref{eq:case2solution_closedform}.  Observe that player $A$'s payoff is constant in the quantity $\bs{p}$.  Thus, for any $\bs{p}$ that satisfies \eqref{eq:case3_mostrestrictive}, it holds that all battlefields are in Case 2, and that player $A$'s payoff is the above.  We conclude sub-part 2-a noting that, for given quantities $R_A$ and $P$, if there exists any $\bs{p}\in\Delta_n(P)$ such that \eqref{eq:case3_mostrestrictive} is satisfied, then $\bs{p}^* = \bs{w} \cdot P$ must also satisfy \eqref{eq:case3_mostrestrictive}, since $||\bs{p}||_\infty \geq ||\bs{p}^*||_\infty$ and the right-hand side in \eqref{eq:case3_mostrestrictive} is increasing in $||\bs{p}||_\infty$.
	
	\noindent\textbf{Part 2-b:} \emph{Any pre-allocation $\bs{p}$ that corresponds to a solution of \eqref{eq:SOE} with $\mcal{B}_1,\mcal{B}_2 \neq \varnothing$ satisfies $\pi_A(\bs{p},R_A,R_B) \leq \pi_A(\bs{p}^*,R_A,R_B)$.}
	
	For easier exposition, the proof of Part 2-b is presented in the Appendix. Together, Parts 2-a and 2-b imply that $\bs{p}^*$ is a global maximizer of the function $\pi_A(\bs{p},R_A,R_B)$, completing the proof of Lemma \ref{lem:global_maximizer}. 
\end{proof}

\noindent\textbf{\underline{Part 3}:} 
In the third and final part, we obtain the formulas for SPE payoffs reported in Theorem \ref{thm:equilibrium_characterization}.

\begin{proof}[Proof of Theorem \ref{thm:equilibrium_characterization}]
	We proceed to derive closed-form solutions for the SPE payoff $\pi_A^*(P,R_A,R_B)$. From Lemmas \ref{lem:SOE} and \ref{lem:global_maximizer}, the SPE payoff is attained by evaluating $\pi_A(\bs{p}^*,R_A,R_B)$, i.e. from equation \eqref{eq:playerA_payoff}. From the discussion of Part 2-a, this amounts to analyzing the two completely symmetric cases $\mcal{B}_1 = \mcal{B}$ and $\mcal{B}_2 = \mcal{B}$.

    \smallskip\noindent\underline{{\bf Case 1 ($\mcal{B}_1 = \mcal{B}$):}} Substituting $\bs{p}^*=\bs{w}\cdot P$ into \eqref{eq:case1solution_closedform} and simplifying, we obtain
    \begin{equation}\label{eq:kappa_case2}
        \begin{aligned}
            \kappa_B^* &= P+R_A + \sqrt{R_A(R_A + 2P)}  \\ 
            \kappa_A^* &= \frac{(P+R_A)\kappa_B^* - P^2}{R_B}.
        \end{aligned}
    \end{equation}
    Next, we verify that this solution satisfies the conditions $0 < \kappa_B^* - P \leq \kappa_A^*$ imposed by the case $\mcal{B}_1 = \mcal{B}$.
    
    \noindent{$\bullet$ $\kappa_B^* - P > 0$:} This holds by inspection.
    
    \noindent{$\bullet$ $\kappa_B^* - P \leq \kappa_A$:} We can write this condition as
    \begin{equation}\label{eq:case2_cond3}
            R_B - P \leq R_A + \frac{PR_A}{R_A + \sqrt{R_A(R_A + 2P)}}
    \end{equation}
    We note that whenever $R_B \leq P$, this condition is always satisfied. When $R_B  > P$, this condition does not automatically hold, and an equivalent expression of \eqref{eq:case2_cond3} is given by
    \begin{equation}
        R_A \geq \frac{2(R_B-P)^2}{P + 2(R_B-P)}.
    \end{equation}
    Observe that $R_A = \frac{2(R_B-P)^2}{P + (R_B-P)}$ satisfies \eqref{eq:case2_cond3} with equality, and is in fact the only real solution (one can reduce it to a cubic polynomial in $R_A$). 
    
    When these conditions hold, the equilibrium payoff $\pi_A^*(P,R_A,R_B) = \pi_A(\bs{p}^*,R_A,R_B)$ can be directly computed from Lemma \ref{lem:SOE}, i.e. \eqref{eq:playerA_payoff}. It is  given by the expression \eqref{eq:allcase1_payoff}.
    
    
    \smallskip\noindent\underline{{\bf Case 2 ($\mcal{B}_2 = \mcal{B}$):}} Substituting $\bs{p}= \bs{w}\cdot P$ into \eqref{eq:case2solution_closedform} and simplifying, we obtain
    \begin{equation}
        \kappa_A^* = \frac{2(R_B - P)}{W} \text{ and } \kappa_B^* = \frac{2(R_B - P)^2}{R_A}.
    \end{equation}
    The solution satisfies the conditions $0 < \kappa_A^* < \kappa_B^* - P$ imposed by the case $\mcal{B}_2 = \mcal{B}$ if and only if $R_B > P$ and  $R_A > \frac{2(R_B-P)^2}{P + (R_B-P)}$. When this holds, the SPE payoff is calculated from \eqref{eq:playerA_payoff} to be $\pi_A^*(P,R_A,R_B) = W\cdot \frac{R_A}{2(R_B-P)}$. 
\end{proof}


\section{Interplay between resource types}\label{sec:interplay_results}

\begin{figure*}[t]
    \centering
    \includegraphics[scale=0.3]{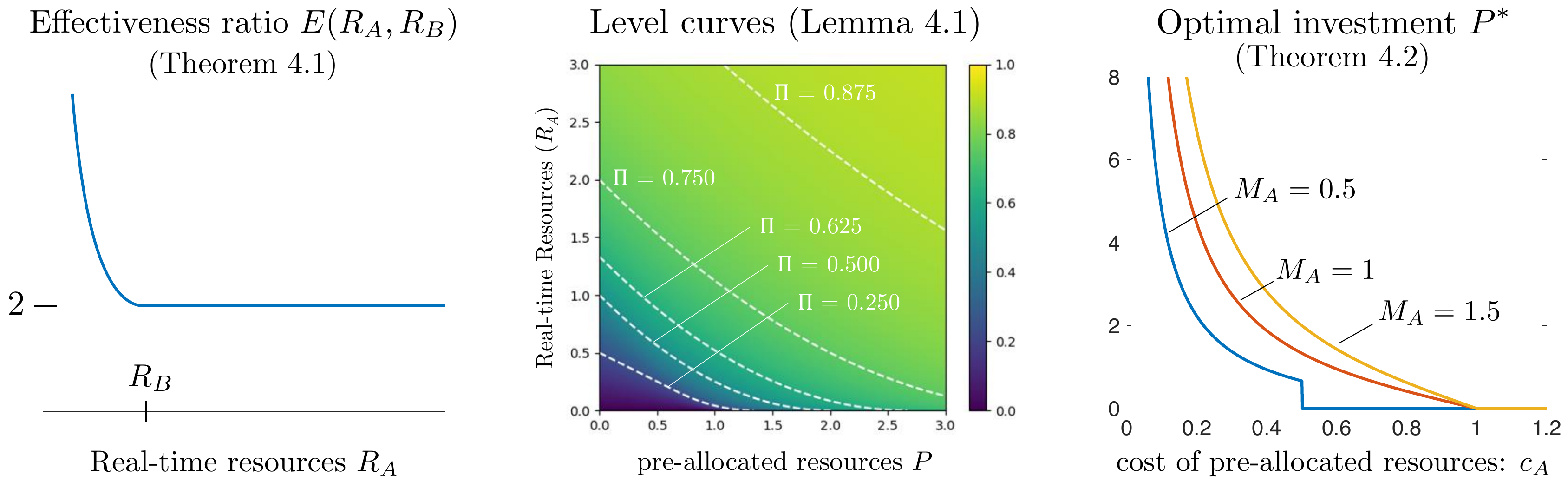}
    \caption{(Left) A plot of the effectiveness ratio $E(R_A,R_B)$ (Theorem \ref{thm:ratio}), which quantifies the multiplicative factor of pre-allocated resources needed to achieve the same performance as an amount of real-time resources $R_A$. Notably, real-time resources are at least twice as effective as an equivalent amount of pre-allocated resources. (Center) This plot shows a collection of level curves for player $A$'s SPE payoff. A level curve corresponds to a fixed performance level $\Pi$, and any point $(P,R_A)$ on the level curve satisfies $\pi_A^*(P,R_A,R_B) = \Pi$ (Lemma \ref{lem:level_set}).  (Right) This plot shows player $A$'s optimal investment in pre-allocated resources $P^*$ when it has a per-unit cost of $c_A$ and a fixed monetary budget of $M_A$ to invest in both types of resources (Theorem \ref{thm:investment}). Player $A$ invests the remaining $M_A - c_A P^*$ in real-time resources. In these plots, we set $R_B = 1$, and $W = 1$.}
    \label{fig:interplay_figure}
\end{figure*}

In this section, we present some implications from Theorem \ref{thm:equilibrium_characterization} regarding the interplay between pre-allocated and real-time resources. Specifically, we seek to compare the relative effectiveness of the two types of resources in the GL-P game by first quantifying an \emph{effectiveness ratio}. We then leverage this analysis to address how player $A$ should invest in both types of resources when they are costly to acquire.

\subsection{The effectiveness ratio}

We define the effectiveness ratio as follows. 

\begin{definition}
	For a given $R_A,R_B > 0$, let $P^\text{eq}(R_A,R_B) > 0$ be the unique value such that $\pi_A^*(P^\text{eq},0,R_B) = \pi_A^*(0,R_A,R_B)$. The \emph{effectiveness ratio} is defined as
	\begin{equation}
		E(R_A,R_B) := \frac{P^\text{eq}}{R_A}
	\end{equation}
\end{definition}

In words, $P^\text{eq}$ is the amount of pre-allocated resources required to achieve the same level of performance  as the amount $R_A$ of real-time resources  in the absence of pre-allocated resources. The effectiveness ratio $E$ thus quantifies the multiplicative factor of pre-allocated resources needed compared to real-time resources $R_A$.

The following result establishes the effectiveness ratio for any given parameters.

\begin{theorem}\label{thm:ratio}
	For a given $R_A,R_B > 0$, the effectiveness ratio is
	\begin{equation}
		E(R_A,R_B) = 
		\begin{cases}
            		2 & \quad \text{if } R_A \geq R_B, \\
            		\frac{2(R_B)^2}{R_A(2R_B-R_A)} & \quad \text{if } R_A < R_B.
        		\end{cases}
	\end{equation}
\end{theorem}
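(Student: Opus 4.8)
The plan is to reduce everything to the two closed-form payoff expressions supplied by Theorem \ref{thm:equilibrium_characterization}, since $E$ is defined purely through $\pi_A^*(0,R_A,R_B)$ and $\pi_A^*(P,0,R_B)$. First I would evaluate the ``pure real-time'' benchmark $\pi_A^*(0,R_A,R_B)$ by setting $P=0$ in the theorem. At $P=0$ the case threshold $\frac{2(R_B-P)^2}{P+2(R_B-P)}$ collapses to $R_B$, so the cases split exactly at $R_A=R_B$: for $R_A\ge R_B$ the Case 1 formula \eqref{eq:allcase1_payoff} simplifies (the square root becomes $R_A$ and the bracketed fraction becomes $1$) to $1-\frac{R_B}{2R_A}$, while for $0<R_A<R_B$ the Case 2 formula \eqref{eq:allcase2_payoff} gives $\frac{R_A}{2R_B}$.

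Next I would evaluate the ``pure pre-allocated'' benchmark $\pi_A^*(P,0,R_B)$ using Case 3, namely \eqref{eq:allcase3_payoff}, which equals $1-\min\{R_B/P,1\}$. Before equating, I would record a monotonicity fact that also justifies the ``unique value'' clause in the definition of $P^{\mathrm{eq}}$: as a function of $P$, this benchmark is identically $0$ on $(0,R_B]$ and strictly increasing from $0$ toward $1$ on $(R_B,\infty)$. Since both real-time benchmarks above lie strictly in $(0,1)$, the equation $\pi_A^*(P^{\mathrm{eq}},0,R_B)=\pi_A^*(0,R_A,R_B)$ has a unique solution, and that solution necessarily lies in the branch $P^{\mathrm{eq}}>R_B$, where $\pi_A^*(P^{\mathrm{eq}},0,R_B)=1-R_B/P^{\mathrm{eq}}$.

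Finally I would solve the defining equation in each regime. For $R_A\ge R_B$, setting $1-\frac{R_B}{P^{\mathrm{eq}}}=1-\frac{R_B}{2R_A}$ yields $P^{\mathrm{eq}}=2R_A$, hence $E=2$; one checks $2R_A>R_B$, so the branch assumption $P^{\mathrm{eq}}>R_B$ is consistent. For $R_A<R_B$, setting $1-\frac{R_B}{P^{\mathrm{eq}}}=\frac{R_A}{2R_B}$ yields $P^{\mathrm{eq}}=\frac{2R_B^2}{2R_B-R_A}$, hence $E=\frac{2R_B^2}{R_A(2R_B-R_A)}$; here the requirement $P^{\mathrm{eq}}>R_B$ reduces to $R_A>0$, again consistent.

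The computations are elementary, so the only real care needed is the bookkeeping: correctly identifying which case of Theorem \ref{thm:equilibrium_characterization} applies when $P=0$, and confirming that the matched performance level forces $P^{\mathrm{eq}}$ into the nontrivial (strictly increasing) branch of Case 3 rather than the flat branch where the payoff is $0$. As a consistency check I would verify continuity at the boundary $R_A=R_B$, where the second expression $\frac{2R_B^2}{R_A(2R_B-R_A)}$ evaluates to $2$, matching the first branch.
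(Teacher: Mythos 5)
Your proposal is correct and is in substance the same argument as the paper's: the paper routes through the level-curve Lemma \ref{lem:level_set}, whose endpoint formula $P^{\mathrm{eq}} = \frac{R_B}{1-\Pi}$ is exactly the equation $1 - R_B/P^{\mathrm{eq}} = \Pi$ that you solve directly from Case 3 of Theorem \ref{thm:equilibrium_characterization}, with the identical case split at $R_A = R_B$ and the identical values $\Pi = \frac{R_A}{2R_B}$ and $\Pi = 1 - \frac{R_B}{2R_A}$. Your explicit checks of the uniqueness of $P^{\mathrm{eq}}$ and of the branch condition $P^{\mathrm{eq}} > R_B$ merely make rigorous what the paper leaves implicit, so there is no gap.
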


Here, it is interesting to note that the ratio $E$ is lower-bounded by $2$ -- real-time resources are at least twice as effective as pre-allocated resources. Additionally, as $R_A \to 0^+$, the ratio grows unboundedly $E \to \infty$. This is due to the fact that without any real-time resources, player $A$ needs $P \geq R_B$ pre-allocated resources to obtain a positive payoff (see third case of Theorem \ref{thm:equilibrium_characterization}). A plot of the ratio $E$ is shown in the left Figure \ref{fig:interplay_figure}.

The proof of Theorem \ref{thm:ratio} relies on the following technical lemma, which provides the level curves of the SPE payoff $\pi_A^*(P,R_A,R_B)$. A level curve with fixed performance level $\Pi \in [0,1]$ is defined as the set of points
 \begin{equation}
    	L_\Pi := \{ (P,R_A) \in \mbb{R}_+^2 : \pi_A^*(P,R_A,R_B) = \Pi \}.
\end{equation}

\begin{lemma}\label{lem:level_set}
    Given any $R_B>0$ and $\bs{w}\in\mbb{R}^n_{++}$, fix a desired performance level $\Pi \in [0,1]$.  The level curve $L_\Pi$ is given by
     \begin{equation}
    	L_\Pi = \bigcup_{P \in \left[0,\frac{R_B}{1-\Pi} \right]} \left(P,R_\Pi(P) \right)
    \end{equation}
    where if $0\leq \Pi < \frac{1}{2}$,
    \begin{equation} \label{eq:level_curve_1and2}
        R_\Pi(P) = \!\!\begin{cases}
            2\Pi (R_B-P) &\, \text{for } P \in \Big[ 0, \frac{(1-2\Pi)R_B}{1-\Pi} \Big) \\
            \!\!\frac{(R_B-(1-\Pi)P)^2}{2R_B(1-\Pi)} &\, \text{for } P \in \Big[\frac{(1-2\Pi)R_B}{1-\Pi},\frac{WR_B}{1-\Pi}\Big]
        \end{cases}
    \end{equation}
    and if $\frac{1}{2} \leq \Pi \leq 1$, 
    \begin{equation} \label{eq:level_curve_3}
        R_\Pi(P) = \frac{(R_B-(1-\Pi)P)^2}{2R_B(1-\Pi)}
    \end{equation}
    If $P> \frac{R_B}{1-\Pi}$, then $\pi^*_A(P,R_A,R_B)>\Pi$ for any $R_A\geq0$.
\end{lemma}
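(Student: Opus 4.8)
The plan is to invert the payoff equation $\pi_A^*(P,R_A,R_B)=\Pi$ for $R_A$ as a function of $P$, using the three-case characterization of Theorem~\ref{thm:equilibrium_characterization}. The step that makes this tractable is the observation that the unwieldy Case~1 formula \eqref{eq:allcase1_payoff} collapses to $\pi_A^*=1-R_B/\kappa_B^*$ with $\kappa_B^*:=P+R_A+\sqrt{R_A(R_A+2P)}$, which follows from the identity $\bigl(R_A+\sqrt{R_A(R_A+2P)}\bigr)^2 = 2R_A\,\kappa_B^*$. Granting this, the two nontrivial functional forms in the lemma arise by direct inversion: solving the Case~2 payoff $\tfrac{R_A}{2(R_B-P)}=\Pi$ gives $R_A=2\Pi(R_B-P)$, while solving $1-R_B/\kappa_B^*=\Pi$ forces $\kappa_B^*=R_B/(1-\Pi)=:K$, and then $\sqrt{R_A(R_A+2P)}=K-P-R_A$ squares to $2KR_A=(K-P)^2$, i.e. $R_A=\tfrac{(R_B-(1-\Pi)P)^2}{2R_B(1-\Pi)}$ (the sign condition $K-P-R_A\ge0$ holds automatically for $P\le K$). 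Here $W=\sum_b w_b=1$, so $K$ coincides with the quantity $WR_B/(1-\Pi)$ in the statement.

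Next I would establish monotonicity to get uniqueness and the domain. For fixed $P$, both the Case~1 and Case~2 payoffs are continuous and strictly increasing in $R_A$, and they agree on their common boundary (by uniqueness of the GL-F payoff), so $\pi_A^*(P,\cdot,R_B)$ is continuous and strictly increasing on $[0,\infty)$, with $\pi_A^*(P,0,R_B)=\max\{0,\,1-R_B/P\}$ from Case~3 and $\pi_A^*\to1$ as $R_A\to\infty$. Therefore $\pi_A^*(P,R_A,R_B)=\Pi$ has a unique root $R_A\ge0$ precisely when $\pi_A^*(P,0,R_B)\le\Pi$, which is equivalent to $P\le K$. This simultaneously pins down the domain $P\in[0,K]$, gives the endpoint value $R_\Pi(K)=0$ (consistent with Case~3 since $1-R_B/K=\Pi$), and proves the final claim: for $P>K$ we have $\pi_A^*(P,0,R_B)=1-R_B/P>\Pi$, hence $\pi_A^*>\Pi$ for every $R_A\ge0$ by monotonicity.

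To produce the branch split I would route each $(P,\Pi)$ to the correct case via a single scalar comparison. For $0\le P<R_B$ the curve $R_A^{\mathrm{bd}}(P)=\tfrac{2(R_B-P)^2}{P+2(R_B-P)}$ separating Case~2 from Case~1 carries payoff $\pi^{\mathrm{bd}}(P):=\tfrac{R_A^{\mathrm{bd}}(P)}{2(R_B-P)}=\tfrac{R_B-P}{2R_B-P}$, which is strictly decreasing in $P$. By monotonicity the root lies in the Case~2 regime exactly when $\Pi<\pi^{\mathrm{bd}}(P)$ and in the Case~1 regime otherwise; a one-line rearrangement shows $\Pi<\pi^{\mathrm{bd}}(P)$ is equivalent to $P<\tfrac{(1-2\Pi)R_B}{1-\Pi}=:P_0$. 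Since $P_0\le0$ when $\Pi\ge\tfrac12$ and $0<P_0<R_B$ when $0<\Pi<\tfrac12$ (with $P\ge R_B$ always in Case~1), this reproduces the two-branch formula for $\Pi<\tfrac12$ and the single Case~1 branch for $\Pi\ge\tfrac12$. I would close by checking continuity at $P_0$, where both branches evaluate to $\tfrac{2\Pi^2 R_B}{1-\Pi}$.

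The main obstacle is conceptual rather than computational: spotting the simplification $\pi_A^*=1-R_B/\kappa_B^*$ that turns Case~1 into a one-line inversion, and packaging the case determination into the threshold comparison $\Pi$ versus $\pi^{\mathrm{bd}}(P)$. Without this, one is forced to invert \eqref{eq:allcase1_payoff} in raw form and then re-verify the Case~1/Case~2 validity conditions separately on each branch, which is error-prone; the monotonicity argument makes the routing automatic.
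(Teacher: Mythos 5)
Your proposal is correct and takes exactly the route the paper intends: the paper omits this proof, stating only that it ``directly follows from the expressions in Theorem \ref{thm:equilibrium_characterization},'' and your argument is precisely that inversion, carried out carefully. Your collapse of \eqref{eq:allcase1_payoff} to $1-R_B/\kappa_B^*$ via $\bigl(R_A+\sqrt{R_A(R_A+2P)}\bigr)^2=2R_A\kappa_B^*$ checks out, and the monotonicity-in-$R_A$ routing correctly yields the branch split at $P=\frac{(1-2\Pi)R_B}{1-\Pi}$ (via $\Pi \lessgtr \frac{R_B-P}{2R_B-P}$), the domain $\left[0,\frac{R_B}{1-\Pi}\right]$, and the final claim for $P>\frac{R_B}{1-\Pi}$.
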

The proof of this Lemma directly follows from the expressions in Theorem \ref{thm:equilibrium_characterization} and is thus omitted. In the center Figure \ref{fig:interplay_figure}, we illustrate level curves associated with varying performance levels $\Pi$. We can now leverage the above Lemma to complete the proof of Theorem  \ref{thm:ratio}.

\begin{proof}[Proof of Theorem \ref{thm:ratio}]
	First, suppose $R_A < R_B$. Then $\pi_A^*(0,R_A,R_B) = \frac{R_A}{2R_B} < 1/2$. Focusing on the level curve associated with the value $\Pi = \frac{R_A}{2R_B}$, the quantity $P^\text{eq}$ is determined as the endpoint of this curve where there are zero real-time resources. From \eqref{eq:level_curve_1and2} , this occurs when $P = \frac{R_B}{1-\Pi} = \frac{2(R_B)^2}{2R_B - R_A}$.
	
	Now, suppose $R_A \geq R_B$. Then $\pi_A^*(0,R_A,R_B) = (1-\frac{R_B}{2R_A} \geq 1/2$. Similarly, the quantity $P^\text{eq}$ is determined as the endpoint of the level curve associated with $\Pi = (1-\frac{R_B}{2R_A})$. From \eqref{eq:level_curve_3} , this occurs when $P = \frac{R_B}{1-\Pi} = 2R_A$. 
	
\end{proof}

\subsection{Optimal investment in resources}

In addition to the effectiveness ratio, the interplay between the two types of resources is also highlighted by the following scenario: player $A$ has an opportunity to make an investment decision regarding its resource endowments. That is, the pair $(P,R_A)\in\mbb{R}^2_+$ is a strategic choice made by player $A$ before the game $\text{GL-P}(P,R_A,R_B,\bs{w})$ is played. Given a monetary budget $M_A > 0$ for player $A$, any pair $(P,R_A)$ must belong to the following set of feasible investments:
\begin{equation}\label{eq:linear_cost_constraint}
        \mcal{I}(M_A) := \{(P,R_A) : R_A + c_A P \leq M_A\}
\end{equation}
where $c_A\geq 0$ is the per-unit cost for purchasing pre-allocated resources, and we assume the per-unit cost for purchasing real-time resources is 1 without loss of generality. We are interested in characterizing player $A$'s optimal investment subject to the above cost constraint, and given player $B$'s resource endowment $R_B>0$. This is formulated as the following optimization problem:
\begin{equation}\label{eq:optimal_investment_problem}
    \pi_A^\mathrm{opt} := \max_{(P,R_A) \in \mcal{I}(M_A)} \pi_A^*(P,R_A,R_B).
\end{equation}




In the result below, we derive the complete solution to the optimal investment problem \eqref{eq:optimal_investment_problem}.

\begin{theorem}\label{thm:investment}
    Fix a monetary budget $M_A>0$, relative per-unit cost $c_A>0$, and $R_B>0$ real-time resources for player $B$.  Then, player $A$'s optimal investment in pre-allocated resources in \eqref{eq:optimal_investment_problem}  is
    \begin{equation}\label{eq:opt_p}
        P^* = 
        \begin{cases}
            \frac{2(1-c_A)}{2-c_A}\frac{M_A}{c_A}, &\text{if } c_A < t \\
            \in [0,\frac{2(1-c_A)}{2-c_A}\frac{M_A}{c_A}], &\text{if } c_A = t \\
            0, &\text{if } c_A > t
        \end{cases}.
    \end{equation}
    where $t := \min\{1,\frac{M_A}{R_B}\}$. The optimal investment in real-time resources is $R_A^* = M_A - c_A P^*$. The resulting payoff $\pi_A^\mathrm{opt}$ to player $A$ is given by
    \begin{equation}
        \begin{cases}
            1 - \frac{R_B}{2M_A}c_A(2-c_A), &\text{if } c_A < t \\
            1 - \frac{R_B}{2M_A}, &\text{if } c_A \geq t \text{ and } \frac{M_A}{R_B} \geq 1 \\
            \frac{M_A}{2R_B}, &\text{if } c_A \geq t \text{ and } \frac{M_A}{R_B} < 1 
        \end{cases}.
    \end{equation}
\end{theorem}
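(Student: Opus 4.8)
The plan is to reduce the two-dimensional investment problem \eqref{eq:optimal_investment_problem} to a one-dimensional search along the budget line, and then to exploit the level-curve description of Lemma \ref{lem:level_set}. First I would observe that the closed forms in Theorem \ref{thm:equilibrium_characterization} show $\pi_A^*(P,R_A,R_B)$ to be nondecreasing in each of $P$ and $R_A$ (equivalently, endowing player $A$ with more of either resource cannot lower her equilibrium payoff). Consequently every feasible point of $\mathcal{I}(M_A)$ can be pushed to the northeast boundary without decreasing the payoff, so the maximum in \eqref{eq:optimal_investment_problem} is attained on the line $R_A + c_A P = M_A$ with $P \in [0, M_A/c_A]$.

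Next I would read the optimization geometrically. Since $\pi_A^*$ is continuous and increasing, its level sets are nested, and by Lemma \ref{lem:level_set} each level curve $L_\Pi$ is the convex, decreasing graph $R_A = R_\Pi(P)$. Maximizing $\pi_A^*$ over the feasible region therefore amounts to finding the largest $\Pi$ whose curve $L_\Pi$ still meets the budget line; by convexity this contact occurs either at an interior tangency, where $R_\Pi'(P) = -c_A$, or at a corner of the segment. On the parabolic branch \eqref{eq:level_curve_3} one computes $R_\Pi'(P) = -\frac{R_B - (1-\Pi)P}{R_B}$, so the tangency condition gives $P = \frac{(1-c_A)R_B}{1-\Pi}$. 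Imposing that this point lie on the budget line $R_A + c_A P = M_A$ determines the optimal level $\Pi = 1 - \frac{c_A R_B(2-c_A)}{2M_A}$, and back-substitution yields $P^* = \frac{2(1-c_A)}{2-c_A}\frac{M_A}{c_A}$ together with the payoff $1 - \frac{R_B}{2M_A}c_A(2-c_A)$, matching the first branch of \eqref{eq:opt_p}.

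The remaining work is to delimit when this interior tangency is actually feasible. I would impose two conditions: the tangent abscissa must be nonnegative, i.e. $c_A < 1$; and it must fall on the parabolic branch rather than the linear branch of \eqref{eq:level_curve_1and2}, which after substituting the optimal $\Pi$ reduces to $c_A \le M_A/R_B$. Together these read $c_A < t := \min\{1, M_A/R_B\}$, exactly the regime of the first branch. When $c_A > t$ the tangency leaves the feasible segment and the maximum is forced to the corner $P^* = 0$, $R_A^* = M_A$; evaluating $\pi_A^*(0,M_A,R_B)$ via Theorem \ref{thm:equilibrium_characterization} gives $1 - \frac{R_B}{2M_A}$ when $M_A \ge R_B$ and $\frac{M_A}{2R_B}$ when $M_A < R_B$. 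The boundary $c_A = t$ is where the tangent point coincides with this corner, producing the stated interval of optimizers.

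I expect the main obstacle to be the case bookkeeping in this last step rather than any single computation: one must justify that the nested, piecewise level curves make the highest-touching-curve argument globally valid (so the interior tangency is a genuine maximum and not merely a stationary point), and one must correctly track which branch of $R_\Pi(P)$—and hence which case of Theorem \ref{thm:equilibrium_characterization}—governs the contact point, since this is precisely what produces the threshold $t$ and the split between the $M_A \ge R_B$ and $M_A < R_B$ payoffs. A more computational alternative, which I would use as a cross-check, is to substitute $R_A = M_A - c_A P$ directly into \eqref{eq:allcase1_payoff} and differentiate in $P$; the first-order condition reproduces the same $P^*$, but the validity ranges still have to be extracted by comparing against the case boundaries of Theorem \ref{thm:equilibrium_characterization}.
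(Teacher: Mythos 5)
Your proposal is correct and follows essentially the same route as the paper's own proof: restrict to the budget line $R_A + c_A P = M_A$, invoke the convex decreasing level curves of Lemma~\ref{lem:level_set} (quasi-concavity), solve the tangency system $R_\Pi'(P^*) = -c_A$, $R_\Pi(P^*) = M_A - c_A P^*$ on the parabolic branch to obtain $\Pi = 1 - \frac{R_B}{2M_A}c_A(2-c_A)$ and $P^* = \frac{2(1-c_A)}{2-c_A}\frac{M_A}{c_A}$, and extract the threshold $t = \min\{1, M_A/R_B\}$ from the requirement that the tangent point be feasible and lie on the correct branch --- which is exactly the paper's argument, down to the same tangency equations and the same slope-at-$P=0$ comparison forcing the corner $(0,M_A)$ when $c_A > t$. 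The one imprecision is your description of the boundary $c_A = t$ when $M_A < R_B$: there the nondegenerate interval of optimizers arises because the budget line, having slope $-c_A = -2\Pi$ with $\Pi = \frac{M_A}{2R_B}$, coincides with the \emph{entire linear branch} of the level curve $R_\Pi$ (whose right endpoint is precisely $\frac{2(1-c_A)}{2-c_A}\frac{M_A}{c_A}$), not because a tangent point coincides with the corner --- the latter mechanism would yield a unique maximizer, as it in fact does in the degenerate case $t = 1$, $c_A = 1$.
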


A plot of the optimal investment $P^*$ \eqref{eq:opt_p} is shown in the right Figure \ref{fig:interplay_figure}. If the cost $c_A$ exceeds 1, then there is no investment in pre-allocated resources since they are less effective than real-time resources. Thus, $c_A$ must necessarily be cheaper than real-time resources in order to invest in any positive amount. We note that while an optimal investment can purely consist of real-time resources, no optimal investment from Theorem \ref{thm:investment} can purely consist of pre-allocated resources. Interestingly, when the monetary budget is small ($M_A < 1$), there is a discontinuity in the investment level $P^*$ at $c_A = R_B$. A visual illustration of how the optimal investments are determined is shown in Figure \ref{fig:investment_method}, which is detailed in the proof below. 

\begin{figure}
    \centering
    \includegraphics[scale=0.35]{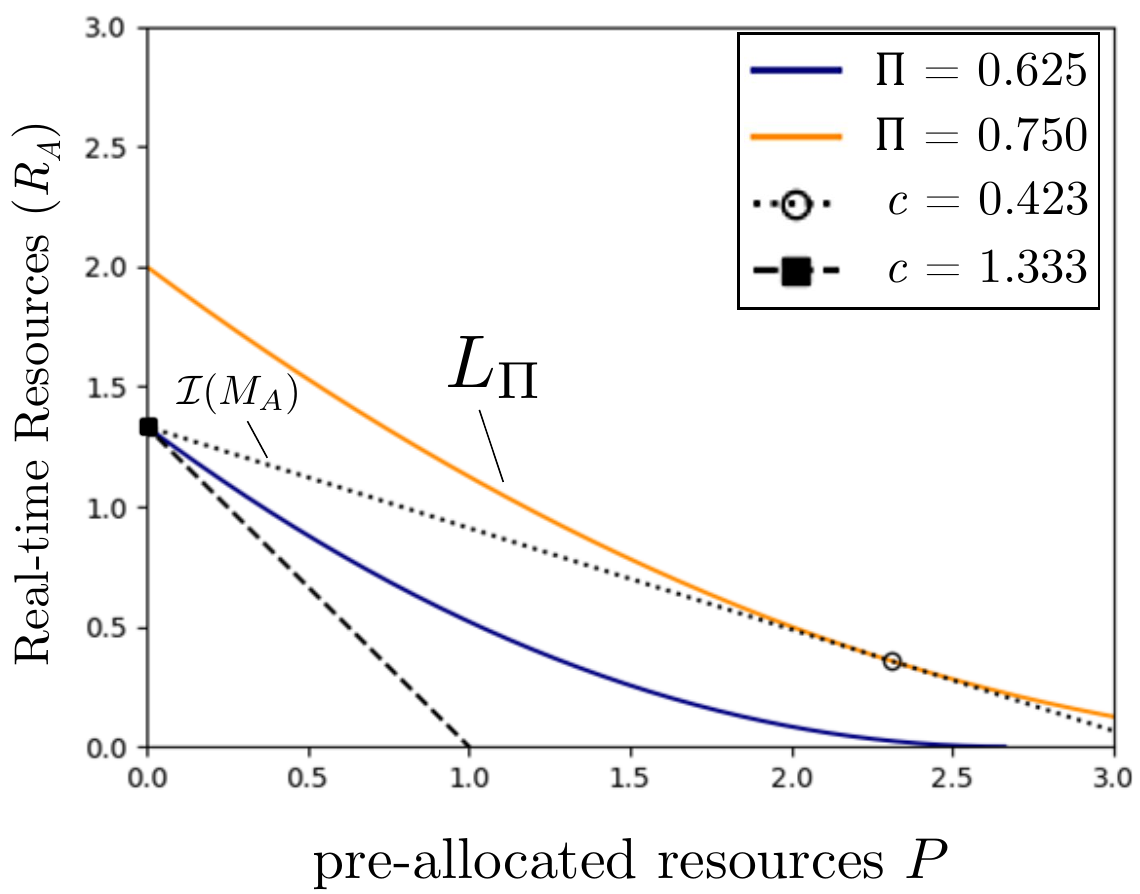}
    \caption{This plot illsutrates how to determine the optimal investment $(P^*,R^*_A)\in\mbb{R}^2_+$ subject to the  cost constraint in \eqref{eq:linear_cost_constraint}.  The set of feasible investments $\mcal{I}(M_A)$ is the line segment connecting $(0,M_A)$ and $(M_A/c_A,0)$. The optimal investment lies on the level curve tangent to this line segment. For example, when $c=0.423$, the optimal investment is $(2.309,0.357)$ (unfilled circle), which gives a performance level of $\Pi = 0.75$. For sufficiently high cost $c_A$, $\mcal{I}(M_A)$ will not be tangent to any level curve, and the optimal investment is $(0,M_A)$. For example, when $c_A=1.333$, the highest level curve that intersects $\mcal{I}(M_A)$ is $\Pi = 0.625$, and the optimal investment is $(0,4/3)$ (filled square).}\label{fig:investment_method}
\end{figure}

\begin{proof}
    We first observe that for any $\Pi \in (0,1)$, the level curve $R_\Pi(P)$ (from Lemma \ref{lem:level_set}) is strictly decreasing and convex in $P \in [0,\frac{R_B}{1-\Pi}]$. Hence, the function $\pi_A(P,R_A,R_B)$ is quasi-concave in $(P,R_A)$.  Observe that the set of points $(P,R_A)\in\mbb{R}^2_+$ that satisfy $R_A + c_A P = M_A$ consists of the line segment $R_A=M_A- c_A P$, $P\in[0,M_A/c_A]$, with slope $-c_A$, and end-points $(M_A,0)$ and $(0,M_A/c_A)$.  Thus, the optimization amounts to finding the highest level curve that intersects with $R_A=M_A-c_A P$, $P\in[0,M_A/c_A]$.  
    
    The slope of a level curve $R_\Pi(P)$ at $P=0$ is
    \begin{equation}
        \frac{\partial R_\Pi}{\partial P}(0) =
        \begin{cases}
            -2\Pi, &\text{if } \Pi < \frac{1}{2} \\
            -1, &\text{if } \Pi \geq \frac{1}{2}
        \end{cases}.
    \end{equation}
    Let $M_A\geq0$ such that $\pi^*_A(0,M_A,R_B)=\Pi\geq 1/2$.  Then, note that, if $-c_A<-1$ (or, equivalently $c_A>1$), then $R_A=M_A-c_A P$ shrinks faster in $P$ than the level curve $R_\Pi(P)$ by monotonicity and convexity of the level curve in $P$.  Thus, the allocation $(P,R_A)=(0,M_A)$ maximizes $A$'s payoff, as all other points on the line segment $R_A=M_A-c_AP$, $P\in[0,M_A/c_A]$, intersect with strictly lower level curves.  Similarly, for $M_A\geq0$ such that $\pi^*_A(0,M_A,R_B)=\Pi<1/2$, the allocation $(P,R_A)=(0,M_A)$ maximizes $A$'s payoff when $c_A>2\Pi$.  Since the condition $\pi^*_A(0,M_A,R_B)=\Pi\geq 1/2$ is equivalent to $M_A\geq R_B$ and $\pi^*_A(0,M_A,R_B)=\frac{R_A}{2 R_B}$ when $M_A<R_B$, it follows that $P^*=0$ if $c_A>\min\{1,\frac{M_A}{R_B}\}$.  For the remainder of the proof, we use $t=1$ (resp. $t= \frac{M_A}{R_B}$) and $M_A\geq R_B$ (resp. $M_A<R_B$) interchangeably.
    
    Suppose that $c_A = t$ and $t = \frac{M_A}{R_B}$. Then the level curve corresponding to $\Pi(M_A) = \frac{W}{2}c_A$ has an interval of budget-feasible points $(P,R_A)$ parameterized by $P \in [0,(1-\frac{c_A}{2-c_A})\frac{M_A}{c_A}]$ with $R_A = M_A - cP$. If $t = 1$, then there is a single budget-feasible point $(P,R_A) = (0,M_A)$ for the level curve corresponding to $\Pi(M_A) = (1-\frac{R_B}{2M_A})$. In both cases, there are no budget-feasible points for any level curve corresponding to $\Pi > \Pi(M_A)$. 
    
    Now, suppose $t = 1$ and $c_A < t$. We wish to find the level curve for which the line $(P,M_A-c_A P)$, $P \in [0,M_A/c_A]$, lies tangent. The point(s) of tangency yields the optimal solution due to the quasi-concavity of $\pi_A^*$. Furthermore, since $M_A\geq R_B$ and $c_A < 1$, a solution $(P,R_A)$ must satisfy $\Pi \in [\frac{1}{2},1]$ and
    \begin{equation}\label{eq:tangent_point}
        \begin{aligned}
            \frac{\partial R_\Pi}{\partial P}(P^*) &= \frac{P^*(1-\Pi)}{R_BW} - 1 &&= -c_A \\
            R_\Pi(P^*) &= \frac{(R_B - (1-\Pi)P^*)^2}{2R_B(1-\Pi)} &&= M_A - c_A P^*
        \end{aligned}
    \end{equation}
    
    \noindent From the first equation, we obtain $P^* = \frac{R_B(1-c_A)}{1-\Pi}$. Plugging this expression into the second equation, we obtain $\Pi = (1 - \frac{R_B}{2M_A}c_A(2-c_A)) \in [\frac{1}{2},1]$, which leads to the unique solution $P^* = (1-\frac{c_A}{2-c_A})\frac{M_A}{c_A} \leq \frac{M_A}{c_A}$.
    
    Lastly, suppose $c_A < t$ and $t = \frac{M_A}{R_B}$ ($M_A < R_B$). Similar to the preceding case, we seek the highest level curve for which the budget constraint is tangent. Due to the assumption that $M_A < R_B$ and $c < \frac{M_A}{R_B}$, we observe that tangent points cannot exist for $P < \frac{1-2\Pi}{1-\Pi}R_B$ and $\Pi < \frac{1}{2}$, i.e. in the region where the level curve is linear. Thus, it must be that either $\Pi < \frac{1}{2}$ and $P \geq \frac{1-2\Pi}{1-\Pi}R_B$, or $\Pi \geq \frac{1}{2}$. In either case, a solution must also satisfy the equations in \eqref{eq:tangent_point}, from which we obtain an identical expression for $P^*$. 
\end{proof}


\section{Two-sided pre-allocations}\label{sec:stackelberg}

The scenarios studied thus far have considered one-sided pre-allocations, where only player $A$ has the opportunity for early investments. The goal in this section is to take preliminary steps in understanding how multiple rounds of early investments, on the part by both competitors, impacts the players' performance in the final stage. We will consider a scenario where player $B$ has an opportunity to respond to the pre-allocation decision of player $A$ with its own pre-allocated resources, which we formulate as a Stackelberg game. 

\begin{remark}
    Before formalizing this game, we remark that such a scenario admits positive and negative pre-allocations, i.e. $p_b > 0$ for some subset of battlefields and $p_b < 0$ on the others. Here, $p_b<0$ means that the amount $|p_b|$ of pre-allocated resources favors player $B$.  While the work in \cite{Vu_EC2021} establishes existence of equilibrium for any such pre-allocations as well as numerical approaches to compute equilibria to arbitrary precision, it does not provide analytical characterizations of them. Indeed, while our current techniques (i.e. from Theorem \ref{thm:equilibrium_characterization}) analytically derive the equilibria for any positive pre-allocation vector, they are yet unable to account for such two-sided favoritism. Developing appropriate methods is subject to future study.
\end{remark}

In light of the aforementioned limitations, we may still investigate the impact of player $B$'s response in the context of a single-battlefield environment\footnote{In contrast to Colonel Blotto games, General Lotto games with a single battlefield still provides rich insights that often generalize to multi-battlefield scenarios \cite{Hart_2008,Hart_2016,Paarporn_2021_budget}.}. The Stackelberg game is defined as follows. Player $A$ has a monetary budget $M_A$ with per-unit cost $c_A \in (0,1)$ for stationary resources. Similarly, player $B$ has a monetary budget $M_B$ with per-unit cost $c_B \in (0,1)$. The players compete over a single battlefield of unit value.

\smallskip \noindent -- \emph{Stage 1:} Player $A$ chooses its pre-allocation investment $p_A \in [0,\frac{M_A}{c_A}]$. This becomes common knowledge.

\smallskip \noindent -- \emph{Stage 2:} Player $B$ chooses its pre-allocation investment $p_B \in [0,\frac{M_B}{c_B}]$. 

\smallskip \noindent -- \emph{Stage 3:} The players engage in the General Lotto game with favoritism $\text{GL-F}(p_A-p_B,M_A-c_Ap_A,M_B-c_Bp_B)$.  Players derive the final payoffs 
\begin{equation}
    \begin{aligned}
        u_A(p_A,p_B) &:= \pi_A^*(p_A-p_B,M_A-c_Ap_A,M_B-c_Bp_B) \\
        u_B(p_A,p_B) &:= 1 - u_A(p_A,p_B)
    \end{aligned}
\end{equation}
Note that $p_A - p_B$ is the favoritism to player $A$. When it is non-negative, $\pi_A^*$ is given precisely by Theorem \ref{thm:equilibrium_characterization}. When it is negative, $\pi^*_A = 1 - \pi^*_B$ where $\pi_B^*$ is given as in Theorem \ref{thm:equilibrium_characterization} with the indices switched.

Let us denote this game as $\text{GL-S}(\{M_i,c_i\}_{i=A,B})$. We seek to characterize the following equilibrium concept.

\begin{definition}
    The investment profile $(p_A^*,p_B^*)$ is a \emph{Stackelberg equilibrium} if 
    \begin{equation}\label{eq:A_stackelberg}
        p_A^* \in \arg\max_{p_A\in[0,M_A/c_A]} \left(\min_{p_B \in [0,M_B/c_B]} u_A(p_A,p_B) \right)
    \end{equation}
    and 
    \begin{equation}
        p_B^* \in \arg\min_{p_B\in[0,M_B/c_B]} u_A(p_A^*,p_B).
    \end{equation}
\end{definition}

Note that the definition in \eqref{eq:A_stackelberg} is in a max-min form, since the final payoffs in GL-S are constant-sum. The characterization of the Stackelberg equilibrium is given in the result below.

\begin{proposition}\label{prop:stack_equil}
    The Stackelberg equilibrium of $\text{GL-S}(\{M_i,c_i\}_{i=A,B})$ is given as follows.
    \begin{enumerate}[leftmargin=*]
        \item Suppose $\frac{M_B}{c_B} \leq M_A$. Then $p_A^*$ is given according to Theorem \ref{thm:investment} and $p_B^* = 0$.
        \item Suppose $M_A < \frac{M_B}{c_B} \leq \frac{M_A}{c_A}$. If $p_A^\dagger < \frac{2(1-c_A)}{2-c_A}\frac{M_A}{c_A}$, then $p_A^*$ is given according to Theorem \ref{thm:investment} and $p_B^* = 0$, where $p_A^\dagger \in (0,\frac{M_A}{c_A}]$ is the unique value that satisfies $u_B(p_A,0) = u_B(p_A,\hat{p}_B)$, with 
        \begin{equation}\label{eq:pBcrit}
            \hat{p}_B := \frac{M_B}{c_B} - \frac{M_B-c_Bp_A}{2-c_B}.
        \end{equation} 
        If $p_A^\dagger \geq \frac{2(1-c_A)}{2-c_A}\frac{M_A}{c_A}$, then $p_A^* = p_A^\dagger$, and $p_B^* = 0$ or $\hat{p}_B$. 
        \item Suppose $\frac{M_A}{c_A} < \frac{M_B}{c_B}$. Then $p_A^* = 0$ and $p_B^* = \hat{p}_B$.
    \end{enumerate}
\end{proposition}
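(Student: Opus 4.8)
The plan is to solve $\text{GL-S}$ by backward induction: first pin down player $B$'s Stage-2 best response $p_B^{\mathrm{BR}}(p_A)$ to an arbitrary $p_A$, then maximize player $A$'s induced payoff in Stage 1. The engine of the argument is a reduction of $B$'s Stage-2 problem to the one already solved in Theorem \ref{thm:investment}. Fixing $p_A$ and restricting attention to the regime $p_B \geq p_A$ (so the net favoritism $\nu := p_B - p_A$ is nonnegative and favors $B$), the change of variables $\nu = p_B - p_A$ leaves $B$ with effective monetary budget $M_B - c_B p_A$, per-unit pre-allocation cost $c_B$, and a single opponent whose real-time budget is the fixed quantity $R_A := M_A - c_A p_A$. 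This is precisely an instance of the investment problem \eqref{eq:optimal_investment_problem}, so Theorem \ref{thm:investment} (with the roles of $A$ and $B$ swapped) supplies $B$'s interior optimizer; translating back through $p_B = p_A + \nu$ gives a candidate $\hat p_B$, and I would verify by a one-line algebraic identity (using $c_B<1$) that it equals \eqref{eq:pBcrit}. The applicability conditions of Theorem \ref{thm:investment}, here reading $c_B < \min\{1, (M_B-c_B p_A)/R_A\}$, determine exactly when this candidate genuinely exceeds $p_A$.

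Next I would show that $B$'s best response is always one of the two candidates $\{0,\hat p_B\}$ (with $\hat p_B$ truncated to $M_B/c_B$ when infeasible). The point is that $u_B(p_A,\cdot)$ is not globally quasi-concave in $p_B$: at $p_B=p_A$ the favoritism $p_A-p_B$ changes sign and the governing branch of Theorem \ref{thm:equilibrium_characterization} switches. Using the explicit payoff formulas, I would argue that on $[0,p_A]$ the payoff $u_B$ is monotone (so $p_B=0$ is the only local maximum there), whereas on $[p_A,M_B/c_B]$ it is quasi-concave with unique peak at $\hat p_B$; comparing the two local maxima yields $p_B^{\mathrm{BR}}$. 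I would then prove that $u_B(p_A,\hat p_B)-u_B(p_A,0)$ is single-crossing in $p_A$. This simultaneously establishes the existence and uniqueness of the indifference level $p_A^\dagger\in(0,M_A/c_A]$ and fixes its orientation: $B$ strictly prefers to pre-allocate $\hat p_B$ for $p_A<p_A^\dagger$ and to concede ($p_B=0$) for $p_A>p_A^\dagger$.

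With $B$'s response in hand, $A$'s Stage-1 objective $\min_{p_B} u_A(p_A,\cdot)=u_A(p_A,p_B^{\mathrm{BR}}(p_A))$ is continuous in $p_A$ (the two branches agree at $p_A^\dagger$ by indifference) but has a kink there, and the three cases follow from where $A$'s unconstrained optimum $\bar p_A := \frac{2(1-c_A)}{2-c_A}\frac{M_A}{c_A}$ (the Theorem \ref{thm:investment} value against an opponent holding real-time $M_B$) sits relative to $p_A^\dagger$, together with the feasibility thresholds comparing $M_B/c_B$ to $M_A$ and to $M_A/c_A$. When $M_B/c_B\leq M_A$, I would show the comparison above always favors $p_B=0$, so $B$ never pre-allocates, $A$ faces a pure real-time opponent with budget $M_B$, and $p_A^*$ is exactly the Theorem \ref{thm:investment} solution (Case 1). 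When $M_A<M_B/c_B\leq M_A/c_A$, either $\bar p_A>p_A^\dagger$, in which case $A$ achieves its unconstrained optimum while $B$ concedes ($p_B^*=0$), or $\bar p_A\leq p_A^\dagger$, in which case $u_A(\cdot,0)$ is already past its peak at $p_A^\dagger$, so $A$ optimally stops exactly at the deterrence point $p_A^*=p_A^\dagger$, where $B$ is indifferent ($p_B^*\in\{0,\hat p_B\}$); here one must also check that no $p_A<p_A^\dagger$, which triggers the response $\hat p_B$, beats the value $u_A(p_A^\dagger,0)$ (Case 2). When $M_A/c_A<M_B/c_B$, $B$'s pre-allocation capacity dominates $A$'s, so $B$ pre-allocates profitably even against $p_A=0$ and $A$ cannot deter it within budget; $A$ then sets $p_A^*=0$ and $B$ responds with $p_B^*=\hat p_B$ (Case 3).

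I expect the principal obstacle to be the second step, fully characterizing $B$'s best-response correspondence. The difficulty is twofold: (i) establishing the within-regime monotonicity on $[0,p_A]$ and quasi-concavity on $[p_A,M_B/c_B]$ directly from the branching piecewise formulas of Theorem \ref{thm:equilibrium_characterization} (each branch of which itself depends on the relative sizes of $R_A$, $R_B$, and the favoritism), so that only $\{0,\hat p_B\}$ survive as candidates; and (ii) proving the single-crossing of $u_B(p_A,\hat p_B)-u_B(p_A,0)$ that makes $p_A^\dagger$ well defined. Once $B$'s response is pinned down, $A$'s optimization reduces to comparing a small number of closed-form expressions and boundary values, which is routine but must be carried out carefully across the regime boundaries to confirm the stated thresholds.
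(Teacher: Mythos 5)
Your overall route coincides with the paper's: backward induction, reduction of $B$'s best response to the two candidates $\{0,\hat{p}_B\}$, an indifference threshold $p_A^\dagger$, and a stage-1 case analysis driven by where $\bar{p}_A=\frac{2(1-c_A)}{2-c_A}\frac{M_A}{c_A}$ sits relative to $p_A^\dagger$ --- this is precisely the structure of Lemmas \ref{lem:uB_pB} and \ref{lem:BR_B} and the paper's proof of Proposition \ref{prop:stack_equil}. The one genuinely different (and attractive) ingredient is your derivation of $\hat{p}_B$: on $p_B\geq p_A$, the substitution $\nu=p_B-p_A$ turns $B$'s subproblem into an instance of the investment problem \eqref{eq:optimal_investment_problem} with effective monetary budget $M_B-c_Bp_A$, cost $c_B$, and opponent real-time budget $R_A=M_A-c_Ap_A$, so Theorem \ref{thm:investment} with roles swapped yields the interior optimizer; indeed $p_A+\frac{2(1-c_B)}{2-c_B}\cdot\frac{M_B-c_Bp_A}{c_B}$ simplifies exactly to \eqref{eq:pBcrit}, and the activation condition $c_B<\min\{1,(M_B-c_Bp_A)/R_A\}$ reproduces the paper's threshold $\frac{M_B}{c_B}>R_A+p_A$. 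The paper instead finds $\hat{p}_B$ by directly differentiating the branch $u_B^{1B}$ and locating its unique critical point (proof of Lemma \ref{lem:uB_pB}). Your reduction is a real shortcut for the interior peak and its activation threshold, but it does not spare you the paper's branch bookkeeping: you still must determine which of the four payoff branches $u_B^{1A},u_B^{2A},u_B^{1B},u_B^{2B}$ governs each subinterval of $[0,\frac{M_B}{c_B}]$, which is where most of the work in Lemma \ref{lem:uB_pB} lives.

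One intermediate claim as stated is false and must be repaired. You assert that $u_B(p_A,\cdot)$ is monotone on $[0,p_A]$, so that $p_B=0$ is the only local maximum there. In fact, in the middle regime $R_A+p_A<\frac{M_B}{c_B}\leq p_A+\frac{1}{2}\left(R_A+\sqrt{R_A(R_A+2p_A)}\right)$, the paper shows $u_B$ is \emph{decreasing} at $p_B=0$ but \emph{increasing} at $p_B=p_A$ (Lemma \ref{lem:uB_pB}, case (c), and the proof of Lemma \ref{lem:BR_B}), so $u_B$ has an interior local minimum on $[0,p_A]$ and is not monotone. Your two-candidate conclusion survives --- the right-endpoint rise continues past $p_A$ up to the peak at $\hat{p}_B$, so the comparison is still $u_B^{1A}(p_A,0)$ versus $u_B^{1B}(p_A,\hat{p}_B)$ --- but the argument must be made via this endpoint comparison, not via monotonicity. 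Relatedly, the single-crossing of $u_B(p_A,\hat{p}_B)-u_B(p_A,0)$ in $p_A$, which you correctly identify as the crux, cannot be waved through: the paper establishes it by showing that (for $\frac{M_B}{c_B}\leq\frac{M_A}{c_A}$) $u_B^{1B}(p_A,\hat{p}_B)$ is decreasing and concave in $p_A$, diverging to $-\infty$ as $p_A\to\frac{M_B}{c_B}$, while $u_B^{1A}(p_A,0)$ has a single interior critical point $\bar{p}_A$ that is a local \emph{minimum}; some shape analysis of this kind is unavoidable in your plan as well, and in Case 3 the orientation flips ($u_B^{1B}(p_A,\hat{p}_B)$ becomes increasing), which is exactly what forces $p_A^*=0$ there.
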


Several comments are in order. In the first interval $\frac{M_B}{c_B} \leq M_A$, player $B$ is sufficiently weak such that it does not respond with any of its own stationary resources against any player $A$ investment $p_A \in [0,\frac{M_A}{c_A}]$. Thus, the Stackelberg solution recovers the result from Theorem \ref{thm:investment}. For a large part of the middle interval $M_A < \frac{M_B}{c_B} \leq \frac{M_A}{c_A}$, $p_A^*$ coincides with the investment from Theorem \ref{thm:investment}, which forces player $B$ to respond with zero stationary resources (i.e. when $p_A^\dagger < \frac{2(1-c_A)}{2-c_A}\frac{M_A}{c_A}$). However, when $p_A^\dagger \geq \frac{2(1-c_A)}{2-c_A}\frac{M_A}{c_A}$, player $A$'s optimal investment makes player $B$ indifferent between responding with zero or with an amount $\hat{p}_B > p_A^*$ of stationary resources that exceeds the pre-allocation of player $A$. In the last interval $\frac{M_A}{c_A} < \frac{M_B}{c_B}$, player $B$ is sufficiently strong such that it is optimal for player $A$ to invest zero stationary resources. 

\subsection{The impact of responding}

We illustrate the implications of Proposition \ref{prop:stack_equil} regarding the responding player's ($B$) performance in the numerical example shown in Figure \ref{fig:stackelberg_payoffs}. Here, we compare player $B$'s Stackelberg equilibrium payoff to the payoff it would have obtained if it did not have an opportunity to respond. We recall that this payoff was characterized in Theorem \ref{thm:investment}. By definition, the Stackelberg payoff is necessarily at least as high as the non-responding payoff -- one performs better being able to respond to a pre-allocation. However, what is notable in Figure \ref{fig:stackelberg_payoffs} is a significant, discontinuous increase in  payoff for player $B$ once its budget $\frac{M_B}{c_B}$ is sufficiently high (i.e. exceeds $\frac{M_A}{c_A}$). 

The presence of the discontinuity strongly suggests that being in a resource-advantaged position with regards to the monetary budget $\frac{M_B}{c_B}$ is a crucial factor for performance in multi-stage resource allocation. At the same time, Proposition \ref{prop:stack_equil} asserts that no response is optimal if player $B$ is resource-disadvantaged (first item). This conclusion is contrasted with the classic simultaneous-move General Lotto games, in which there is no such discontinuity in the equilibrium payoffs -- they vary continuously in the players' budgets \cite{Hart_2008,Kovenock_2020}.

\begin{figure}[h]
    \centering
    \includegraphics[scale=0.35]{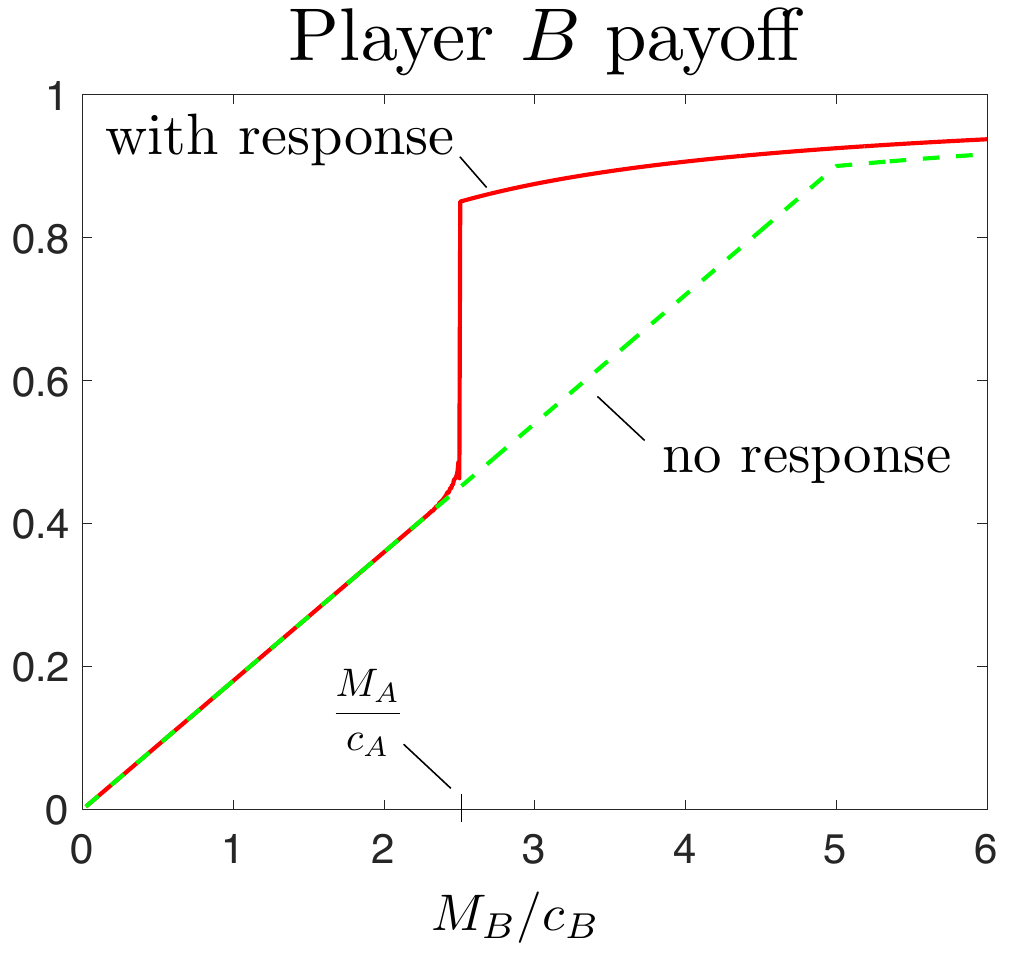}
    \caption{This plot illustrates the Stackelberg equilibrium payoff (red line, Proposition \ref{prop:stack_equil}) to player $B$ contrasted with its payoff if it did not have the opportunity to respond with pre-allocated resources, i.e. setting $p_B = 0$ (green dashed line, Theorem \ref{thm:investment}). Notably, there is a dramatic improvement in performance when player $B$ is sufficiently budget-rich, $\frac{M_B}{c_B} = \frac{M_A}{c_A}$. In this example, we set $M_A = 0.5$, $c_A = 0.2$, and $c_B = 0.5$. We vary $M_B$ from $0$ to $3$.}
    \label{fig:stackelberg_payoffs}
\end{figure}

The remainder of this section provides the proof of Proposition \ref{prop:stack_equil}, which utilizes two supporting Lemmas.

\subsection{Follower's best-response}

We begin by analyzing player $B$'s best-response to any player $A$ pre-allocation $p_A$. We need to find $p_B^*$ that solves
\begin{equation}
    \max_{p_B\in[0,M_B/c_B]} u_B(p_A,p_B).
\end{equation}

Let us denote $\tbfp = (p_A,p_B)$, $R_A = M_A - c_Ap_A$, and $R_B = M_B - c_Bp_B$. Define $f_A(\tbfp) := R_A + \sqrt{R_A(R_A + 2(p_A-p_B))}$, $f_B(\tbfp) := \sqrt{R_B}$, and $g_B(\tbfp) := \sqrt{R_B + 2(p_B - p_A)}$. Define
\begin{equation}
    \begin{aligned}
        u_B^{1A}(\tbfp) &:= \frac{M_B - c_Bp_B}{2R_A}\left(\frac{f_A(\tbfp) }{p_A-p_B + f_A(\tbfp)} \right)^2 \\
        u_B^{2A}(\tbfp) &:= 1 - \frac{R_A}{2(M_B - c_Bp_B - (p_A-p_B) )} \\
        u_B^{1B}(\tbfp) &:= 1 - \frac{R_A}{2}\left(\frac{f_B(\tbfp) + g_B(\tbfp)}{p_B - p_A + f_B(\tbfp)(f_B(\tbfp) + g_B(\tbfp))} \right)^2 \\
        u_B^{2B}(\tbfp) &:= \frac{M_B - c_Bp_B}{2(R_A - (p_B - p_A))}
    \end{aligned}
\end{equation}

From Theorem \ref{thm:equilibrium_characterization}, one can write player $B$'s payoff in GL-S as
\begin{equation}
    u_B(\tbfp) =
    \begin{cases}
        u_B^{1A}(\tbfp) &\text{if } \tbfp \in \mc{R}^{1A} \\
        u_B^{2A}(\tbfp) &\text{if } \tbfp \in \mc{R}^{2A} \\
        u_B^{1B}(\tbfp) &\text{if } \tbfp \in \mc{R}^{1B} \\
        u_B^{2B}(\tbfp) &\text{if } \tbfp \in \mc{R}^{2B}
    \end{cases}
\end{equation}
where
\begin{equation}
    \begin{aligned}
        \mc{R}^{1A} &= \{p_A \geq p_B : R_B < p_A - p_B, \text{ or }  \\
        & R_B \geq p_A - p_B \text{ and } R_A \geq \frac{2(R_B - (p_A-p_B))^2}{2R_B - (p_A-p_B) }\} \\
        \mc{R}^{2A} &= \{p_A \geq p_B \} \backslash \mc{R}^{1A}  \\
        \mc{R}^{1B} &= \{p_B > p_A : R_A < p_B - p_A, \text{ or }  \\
        & R_A \geq p_B - p_A \text{ and } R_B \geq \frac{2(R_A - (p_B-p_A))^2}{2R_A - (p_B-p_A) }\} \\
        \mc{R}^{2B} &= \{p_B > p_A \} \backslash \mc{R}^{1B}
    \end{aligned}
\end{equation}
The following Lemma details player $B$'s payoff for any response $p_B \in [0,\frac{M_B}{c_B}]$.
\begin{lemma}\label{lem:uB_pB}
    Consider any fixed strategy $(p_A,R_A)$ for player $A$. Player $B$'s payoff is given as follows.
    \begin{enumerate}[label=\alph*)]
        \item If $\frac{M_B}{c_B} \leq R_A + p_A$, then $u_B(\tbfp)$ is decreasing for all $p_B \in [0,\frac{M_B}{c_B}]$.
        \item  If $R_A + p_A < \frac{M_B}{c_B} \leq \frac{R_A}{c_B} + p_A$, then
        \begin{equation}
            u_B(\tbfp) = 
            \begin{cases}
                u_B^{1A}(\tbfp), &\text{if } p_B \in [0,p_A] \\
                u_B^{2B}(\tbfp), &\text{if } p_B \in (p_A,p_B^{1B}] \\
                u_B^{1B}(\tbfp), &\text{if } p_B \in (p_B^{1B},\frac{M_B}{c_B}]
            \end{cases}
        \end{equation}
        where $p_B^{1B}$ is the unique solution to
        \begin{equation}
             F(p_B) := \frac{(R_A+p_A - p_B)^2}{2R_A+p_A - p_B} = M_B-c_Bp_B.
        \end{equation}
        \item If $\frac{R_A}{c_B} + p_A < \frac{M_B}{c_B} \leq \frac{1}{c_B}\left(p_A + \frac{R_A + \sqrt{R_A(R_A + 2 p_A)}}{2}\right)$, then
        \begin{equation}
            u_B(\tbfp) =
            \begin{cases}
                u_B^{1A}(\tbfp), &p_B \in [0,p_B^{1A}] \\
                u_B^{2A}(\tbfp), &p_B \in (p_B^{1A},p_A] \\
                u_B^{1B}(\tbfp), &p_B \in (p_A,\frac{M_B}{c_B}]
            \end{cases}
        \end{equation}
        where $p_B^{1A}$ is the unique solution to
        \begin{equation}
            G(p_B) := \frac{2(M_B+(1-c_B)p_B - p_A)^2}{2(M_B - c_Bp_B)-(p_A-p_B)} = R_A.
        \end{equation}
        on $p_B \in (0,p_A)$
        \item If $\frac{M_B}{c_B} > \frac{1}{c_B}\left(p_A + \frac{R_A + \sqrt{R_A(R_A + 2 p_A)}}{2} \right)$, then
        \begin{equation}
            u_B(\tbfp) = 
            \begin{cases}
                u_B^{2A}(\tbfp), &\text{if } p_B \in [0,p_A] \\
                u_B^{1B}(\tbfp), &\text{if } p_B \in (p_A,\frac{M_B}{c_B}].
            \end{cases}
        \end{equation}
    \end{enumerate}
\end{lemma}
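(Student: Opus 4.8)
The plan is to regard $u_B(\tbfp) = 1 - \pi_A^*(p_A - p_B, R_A, R_B)$ as a one-parameter function of $p_B$ along the segment $p_B \in [0, M_B/c_B]$, with $(p_A,R_A)$ fixed, and to read off which branch of Theorem \ref{thm:equilibrium_characterization} (applied to the favored player) is active. Writing $P(p_B) := p_A - p_B$ for the favoritism toward $A$ and $R_B(p_B) := M_B - c_B p_B$, both are affine and, since $c_B < 1$, strictly decreasing in $p_B$; the favoritism changes sign at $p_B = p_A$, separating the $A$-favored branches $u_B^{1A}, u_B^{2A}$ (Theorem \ref{thm:equilibrium_characterization} with $P \ge 0$) from the $B$-favored branches $u_B^{1B}, u_B^{2B}$ (Theorem \ref{thm:equilibrium_characterization} with $A,B$ swapped and favoritism $p_B - p_A \ge 0$). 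The whole proof is then: (i) locate the at-most-three region transitions as $p_B$ increases, and (ii) on each active branch determine the sign of $\tfrac{d}{dp_B} u_B$, which is only needed in full for part (a).

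\textbf{Region geometry.} On the $A$-favored part $p_B \in [0, \min\{p_A, M_B/c_B\}]$ the $1A$/$2A$ boundary is $R_A = G(p_B)$ with $G(p_B) := \tfrac{2(R_B-P)^2}{2R_B - P}$, the Case-1/Case-2 cutoff of Theorem \ref{thm:equilibrium_characterization}. I would first show $G$ is strictly monotone along the path, so $R_A - G$ changes sign at most once, giving a unique root $p_B^{1A}$ on $(0,p_A)$ — exactly the equation $G(p_B)=R_A$ stated in (c). Evaluating endpoints pins down the critical values: $G(p_A) = R_B(p_A) = M_B - c_B p_A$, so $R_A \gtrless G(p_A)$ is the threshold $M_B \lessgtr R_A + c_B p_A$ separating (b)/(c); and $R_A < G(0)$ reduces, via the quadratic $2(M_B-p_A)^2 > R_A(2M_B - p_A)$, to $M_B > p_A + \tfrac{R_A + \sqrt{R_A(R_A + 2p_A)}}{2}$, the (c)/(d) threshold. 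On the $B$-favored part, just past $p_A$ the $1B$ condition requires $R_B \ge R_A$, so we enter $2B$ when $R_B(p_A) \le R_A$ (cases a,b) and $1B$ when $R_B(p_A) > R_A$ (cases c,d); the $2B\to 1B$ crossing $p_B^{1B}$ is the unique point where $R_B$ meets the swapped Case-1/Case-2 boundary, i.e. the stated equation $F(p_B) = M_B - c_B p_B$, with uniqueness again from monotonicity. Finally, the three critical values are ordered, $R_A + p_A < \tfrac{R_A}{c_B} + p_A < \tfrac{1}{c_B}\big(p_A + \tfrac{R_A + \sqrt{R_A(R_A+2p_A)}}{2}\big)$ (using $c_B<1$), so as $M_B/c_B$ increases it passes through (a),(b),(c),(d) in order; this is what makes the four descriptions exhaustive and mutually exclusive.

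\textbf{Monotonicity.} The decisive derivative is on the $2B$ branch: computing $\tfrac{d}{dp_B}\, u_B^{2B} = \tfrac{d}{dp_B}\tfrac{M_B - c_B p_B}{2(R_A + p_A - p_B)}$ yields a numerator with sign $\mathrm{sign}(M_B - c_B(R_A + p_A))$, so $u_B^{2B}$ is nonincreasing precisely when $M_B/c_B \le R_A + p_A$ — exactly the boundary of case (a). For part (a) I would then (i) show $u_B^{1A}$ is nonincreasing on $[0,p_A]$, where the loss of real-time budget $R_B$ dominates the gain in favoritism, and (ii) observe that under $M_B/c_B \le R_A + p_A$ the right endpoint has $R_B = 0$, which lies below the swapped boundary, so the path never leaves $1A \cup 2B$. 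Continuity of the equilibrium payoff across $p_B = p_A$ (Theorem \ref{thm:equilibrium_characterization} gives a unique payoff) then yields that $u_B$ is decreasing on the whole interval.

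\textbf{Main obstacle.} The bookkeeping of which region is active is routine once the endpoint evaluations are in hand; the real work is the monotonicity claims — that $G$ and the $B$-side boundary are monotone along the path (so each boundary is crossed exactly once and $p_B^{1A}, p_B^{1B}$ are unique), and that $u_B^{1A}$ is decreasing in case (a) despite the competing favoritism and real-time effects. After clearing denominators, and rationalizing the square root in $u_B^{1A}$, each reduces to the sign of a low-degree polynomial, and $c_B < 1$ is what renders these signs definite.
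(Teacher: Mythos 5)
Your region bookkeeping follows essentially the same route as the paper's proof: split at $p_B=p_A$, restate $\mc{R}^{1A}$ and $\mc{R}^{1B}$ as sign conditions on boundary functions along the line $(p_A-p_B,\,M_B-c_Bp_B)$, and locate the thresholds by endpoint evaluation. Your computations there are correct and match the paper's: $G(p_A)=M_B-c_Bp_A$ gives the (b)/(c) threshold, $R_A<G(0)$ rearranges to the (c)/(d) threshold via the quadratic you wrote, entry into $1B$ vs.\ $2B$ just past $p_A$ is governed by $M_B-c_Bp_A\gtrless R_A$, and the derivative of $u_B^{2B}$ indeed has constant numerator $M_B-c_B(R_A+p_A)$. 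Deferring the monotonicity of $G$ and of $u_B^{1A}$ to low-degree polynomial sign checks is also what the paper does.

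However, your argument for part (a) has a genuine gap: the claim that under $\frac{M_B}{c_B}\le R_A+p_A$ ``the path never leaves $1A\cup 2B$'' is false, and the endpoint reasoning you give for it is a non sequitur. On $(p_A,\frac{M_B}{c_B}]$, membership in $\mc{R}^{1B}$ is the condition $F(p_B)\le L(p_B)$ with $F(p_B)=\frac{2(R_A+p_A-p_B)^2}{2R_A+p_A-p_B}$ and $L(p_B)=M_B-c_Bp_B$ \emph{both} decreasing; having $F>L$ at both endpoints ($F(p_A)=R_A>L(p_A)$ and $L(\frac{M_B}{c_B})=0$) does not preclude $F\le L$ on an interior subinterval, since clearing denominators makes $F=L$ a quadratic with possibly two roots $r_-\le r_+$ inside the interval. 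This excursion $2B\to 1B\to 2B$ actually occurs: e.g.\ $p_A=0$, $R_A=1$, $c_B=\tfrac12$, $M_B=0.49$ gives $r_-\approx 0.695$, $r_+\approx 0.978$ in $(0,0.98]$. On $(r_-,r_+]$ the payoff is $u_B^{1B}$, which is \emph{not} globally decreasing --- it has an interior local maximizer $\hat p_B$ --- so decreasingness of $u_B$ in case (a) does not follow from your two branch computations. Closing this requires the additional comparison $\hat p_B\le r_-$ (the paper proves it by noting $r_-(c_B(R_A+p_A))=\hat p_B$, that $r_-$ is decreasing and $\hat p_B$ increasing in $M_B$), so that $u_B^{1B}$ is decreasing on the entire excursion; this step is absent from, and not implied by, your outline. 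A smaller imprecision of the same flavor: uniqueness of $p_B^{1B}$ in part (b) cannot come from monotonicity of $F$ alone, since $L$ also decreases; it follows from the endpoint sign change together with the fact that $F=L$ is quadratic.
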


The proof is deferred to Appendix \ref{sec:follower_lemma}. The final Lemma characterizes player $B$'s best response to any fixed player $A$ strategy.

\begin{lemma}\label{lem:BR_B}
    Consider any fixed strategy $(p_A,R_A)$ for player $A$. Player $B$'s best-response
    \begin{equation}
        p_B^* := \argmax{p_B \in [0,\frac{M_B}{c_B}]} u_B(p_A,p_B)
    \end{equation}
    is determined according to
    \begin{equation}
        p_B^* = 
        \begin{cases}
            0, &\text{if } \frac{M_B}{c_B} < h(R_A,p_A) \\
            \hat{p}_B, &\text{if } \frac{M_B}{c_B} > h(R_A,p_A) \\
            0 \text{ or } \hat{p}_B, &\text{if } \frac{M_B}{c_B} = h(R_A,p_A)
        \end{cases}
    \end{equation}
    where $\hat{p}_B$ was defined in \eqref{eq:pBcrit}, and $h(R_A,p_A) \in (R_A+p_A,p_A+\frac{R_A+\sqrt{R_A(R_A+2p_A)}}{2})$ is the unique value of $\frac{M_B}{c_B}$ at which $u_B^{1A}(p_A,0) = u_B^{1B}(p_A,\hat{p}_B)$.
\end{lemma}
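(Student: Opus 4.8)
The plan is to reduce the maximization of $u_B(p_A,\cdot)$ over $[0,M_B/c_B]$ to a comparison between just two points, $p_B=0$ and $p_B=\hat{p}_B$, and then to locate the threshold $h$ at which the maximizer switches between them. Using the piecewise description of Lemma \ref{lem:uB_pB}, I would split the feasible interval at $p_B=p_A$ into the \emph{unfavored} region $[0,p_A]$, where the net favoritism $p_A-p_B$ is toward $A$, and the \emph{favored} region $[p_A,M_B/c_B]$, where it is toward $B$.

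On the favored region, $u_B=\pi_A^*(p_B-p_A,\,M_B-c_Bp_B,\,R_A)$ with the roles of the players interchanged, and as $p_B$ increases the pair $(\text{pre-allocation},\text{real-time})=(p_B-p_A,\,M_B-c_Bp_B)$ sweeps the budget line $R_B+c_B(p_B-p_A)=M_B-c_Bp_A$. This is precisely the investment problem of Theorem \ref{thm:investment} with monetary budget $M_B-c_Bp_A$, unit cost $c_B<1$, and opponent resources $R_A$. That theorem gives quasi-concavity of $u_B$ on $[p_A,M_B/c_B]$ and, exactly when $M_B/c_B>R_A+p_A$ (equivalently $c_B<(M_B-c_Bp_A)/R_A$), a unique interior maximizer; a direct computation identifies it with $\hat{p}_B$ of \eqref{eq:pBcrit}, with value $u_B^{1B}(p_A,\hat{p}_B)$. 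Hence the favored-side maximum is pinned to $\hat{p}_B$.

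The crux is to show the unfavored-side maximum is attained at an endpoint, so that the only global candidates are $0$ and $\hat{p}_B$. Writing $P=p_A-p_B$, $R_B=M_B-c_Bp_B$, $f=\sqrt{R_A(R_A+2P)}$ and $s=R_A+f$, I would differentiate the closed forms piecewise: on $\mc{R}^{2A}$ the derivative $\partial_P\pi_A^*+c_B\partial_{R_B}\pi_A^*$ is a positive multiple of $(1-c_B)$, so $u_B$ is increasing there; on $\mc{R}^{1A}$, using the identity $s^2=2R_A(s+P)$, its sign equals that of $\tfrac{2R_BP}{f}-c_B(s+P)$, which I would show is negative throughout $[0,p_A]$ whenever $M_B/c_B$ lies below the level $\tfrac{f(s+p_A)}{2p_A}$. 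Since this level exceeds the top of the range for $h$ identified below, $u_B$ is either decreasing on all of $[0,p_A]$ or decreasing-then-increasing, so its maximum over $[0,p_A]$ sits at $p_B=0$ or at $p_B=p_A$; and $u_B(p_A)\le u_B^{1B}(p_A,\hat{p}_B)$ because $p_A$ belongs to the closure of the favored region, leaving $0$ and $\hat{p}_B$ as the only candidates. I expect this sign bookkeeping — complicated by the fact that which region ($\mc{R}^{1A}$ or $\mc{R}^{2A}$) governs $p_B=0$ itself depends on $M_B/c_B$ — to be the main obstacle.

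It then remains to compare $u_B(0)$ with $u_B^{1B}(p_A,\hat{p}_B)$. On the interval $M_B/c_B\in(R_A+p_A,\,p_A+\tfrac{R_A+\sqrt{R_A(R_A+2p_A)}}{2})$ one verifies (using $c_B<1$) that $p_B=0$ lies in $\mc{R}^{1A}$, so $u_B(0)=u_B^{1A}(p_A,0)$ and the comparison is exactly the indifference defining $h$. I would then show that the gap $\psi:=u_B^{1B}(p_A,\hat{p}_B)-u_B^{1A}(p_A,0)$, viewed as a function of $M_B/c_B$, is strictly increasing with a single sign change: it is negative at the left endpoint $R_A+p_A$ (matching Lemma \ref{lem:uB_pB}(a), where $u_B$ is decreasing and $p_B^*=0$) and positive at the right endpoint. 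Continuity and monotonicity yield a unique root $h$ in the open interval, and combining with the two-candidate reduction gives $p_B^*=0$ for $M_B/c_B<h$, $p_B^*=\hat{p}_B$ for $M_B/c_B>h$, and either at equality, completing the proof.
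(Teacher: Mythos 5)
Your overall architecture is sound and matches the paper's proof of Lemma \ref{lem:BR_B} in outline: reduce the search to the two candidates $p_B \in \{0,\hat{p}_B\}$, verify that $p_B=0$ lies in $\mc{R}^{1A}$ on the middle interval, and locate the switching budget via the indifference $u_B^{1A}(p_A,0) = u_B^{1B}(p_A,\hat{p}_B)$. However, your unfavored-side derivative on $\mc{R}^{1A}$ is wrong, and the step built on it would fail as stated. With $P = p_A - p_B$, $R_B = M_B - c_Bp_B$, $f = \sqrt{R_A(R_A+2P)}$, $s = R_A+f$, the identity $s^2 = 2R_A(s+P)$ collapses $u_B^{1A}$ to $R_B/(s+P)$, and since $\frac{d}{dp_B}(s+P) = -s/f$, the sign of $\frac{du_B}{dp_B}$ is that of $\frac{R_Bs}{f} - c_B(s+P)$, \emph{not} $\frac{2R_BP}{f} - c_B(s+P)$. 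The difference is not cosmetic: your expression tends to $-c_Bs<0$ as $p_B \uparrow p_A$, i.e., it predicts $u_B$ decreasing at $p_B = p_A$ for every budget, whereas by Lemma \ref{lem:uB_pB} (and the paper's proof) $u_B$ is \emph{increasing} at $p_B = p_A$ throughout the middle interval $R_A+p_A < \frac{M_B}{c_B} \leq p_A + \frac{1}{2}\bigl(R_A+\sqrt{R_A(R_A+2p_A)}\bigr)$ --- exactly the regime where $h$ lives. Accordingly, your level $\frac{f(s+p_A)}{2p_A}$ and the claim that $u_B$ is decreasing on all of $[0,p_A]$ whenever $\frac{M_B}{c_B}$ lies below it are false; decreasing on all of $[0,p_A]$ in fact requires $\frac{M_B}{c_B} \leq R_A+p_A$, which is case (a) of Lemma \ref{lem:uB_pB}.

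The gap is repairable, and the repair vindicates your structural conclusion: since $fs = R_A(s+2P)$ gives $\frac{f(s+P)}{s} = \frac{s+2P}{2}$, the correct condition for $u_B^{1A}$ to be decreasing at a point is $\frac{M_B}{c_B} \leq p_A + \frac{s(P)}{2}$, and this threshold decreases monotonically from $p_A+\frac{1}{2}\bigl(R_A+\sqrt{R_A(R_A+2p_A)}\bigr)$ at $p_B=0$ to $p_A+R_A$ at $p_B=p_A$; hence the derivative changes sign at most once (negative to positive), yielding the decreasing-then-increasing shape and endpoint maximum on $[0,p_A]$ that your two-candidate reduction needs (the switch to the increasing piece $u_B^{2A}$ only reinforces this). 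The remainder of your plan is correct and partly a genuine alternative to the paper: identifying the favored side $[p_A,\frac{M_B}{c_B}]$ with the investment problem of Theorem \ref{thm:investment} (budget $M_B-c_Bp_A$, cost $c_B$, opponent $R_A$), and checking $p_A + \frac{2(1-c_B)}{2-c_B}\frac{M_B-c_Bp_A}{c_B} = \hat{p}_B$, is slicker than the paper's direct derivative analysis of $u_B^{1B}$ and $u_B^{2B}$ in Appendix B; and your monotone-gap/intermediate-value argument for uniqueness of $h$ (the paper instead solves the indifference explicitly as a quadratic in $M_B$) does go through --- using $u_B^{1A}(p_A,0) = \frac{M_B}{2}\cdot\frac{4R_A}{s(p_A)^2}$ one can verify the gap is strictly increasing with the required endpoint signs --- but that strict monotonicity is itself a nontrivial computation you deferred rather than proved.
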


The proof is deferred to Appendix \ref{sec:BR_B}, where we thoroughly analyze the properties of the functions $u_B^{1A}$, $u_B^{2A}$, $u_B^{1B}$, and $u_B^{2B}$ as characterized by Lemma \ref{lem:uB_pB}. This allows us to identify the maximizer $p_B^* \in [0,\frac{M_B}{c_B}]$ of $u_B$ for all possible cases stated in Lemma \ref{lem:uB_pB}.

\subsection{Proof of Proposition \ref{prop:stack_equil}}
We are now ready to establish Proposition \ref{prop:stack_equil}. 

\begin{proof}
Let us define 
\begin{equation}
    \begin{aligned}
        T_1(p_A) &:= p_A + R_A(p_A) \\
        T_2(p_A) &:= p_A + \frac{R_A(p_A) + \sqrt{R_A(p_A) (R_A(p_A) + 2p_A)}}{2}.
    \end{aligned}
\end{equation}
where $R_A(p_A) = M_A-c_Ap_A$. It holds that $T_1(0) = T_2(0) = M_A$, $T_1(\frac{M_A}{c_A}) = T_2(\frac{M_A}{c_A}) = \frac{M_A}{c_A}$, and $T_1(p_A) < T_2(p_A)$ on the interval $p_A \in (0,\frac{M_A}{c_A})$. We prove the result item by item.

\vspace{1mm}

\noindent 1) Suppose $\frac{M_B}{c_B} \leq M_A$. In this case, we have $\frac{M_B}{c_B} \leq T_1(p_A)$ for all $p_A \in [0,\frac{M_A}{c_A}]$, with equality at $p_A = 0$ if and only if $\frac{M_B}{c_B} = M_A$. Thus, player $B$'s best-response against any $p_A$ is $p_B^* = 0$ (Lemma \ref{lem:BR_B}). The scenario reduces to the optimization problem from Corollary 4.1.

\vspace{1mm}

\noindent 2) Suppose $M_A < \frac{M_B}{c_B} \leq \frac{M_A}{c_A}$. 


By Lemma \ref{lem:BR_B}, the threshold at which player $B$'s best-response switches is given by the value of $p_A$ that satisfies $h(M_A-c_Ap_A,p_A) = \frac{M_B}{c_B}$. Equivalently, this is the value of $p_A$ that satisfies $u_B^{1A}(p_A,0) = u_B^{1B}(p_A,\hat{p}_B)$ with $\frac{M_B}{c_B} \leq \frac{1}{c_B}T_2(p_A)$. The latter condition ensures that either $u_B^{1A}(p_A,0)$ or $u_B^{1B}(p_A,\hat{p}_B)$ is player $B$'s best-response payoff (Lemma \ref{lem:uB_pB}). One can write 
\begin{equation}
    u_B^{1B}(p_A,\hat{p}_B) = 1 - \frac{c_A(2-c_B)}{2}\frac{\frac{M_A}{c_A} - p_A}{\frac{M_B}{c_B} - p_A}
\end{equation}
This is a decreasing and concave function on $p_A\in[0,\frac{M_B}{c_B})$, and it decreases to $-\infty$ as $p_A \rightarrow \frac{M_B}{c_B}$. The payoff $u_B^{1A}(p_A,0)$ is given by
\begin{equation}
    u_B^{1A}(p_A,0) = \frac{M_B}{2\sqrt{R_A(p_A)}}\left(\frac{2(T_2(p_A) - p_A)}{T_1(p_A) + 2(T_2(p_A) - p_A)} \right)^2
\end{equation}
This function has a single critical point in the interval $[0,\frac{M_A}{c_A}]$ at $\bar{p}_A = \frac{2(1-c_A)}{2-c_A}\frac{M_A}{c_A}$, which is a local minimum. It is decreasing on $[0,\bar{p}_A)$ and increasing on $(\bar{p}_A,\frac{M_A}{c_A}]$. 



There is a unique value $p_A^\dagger\in (0,\frac{M_B}{c_B})$ such that $\frac{M_B}{c_B} \leq \frac{1}{c_B}T_2(p_A^\dagger)$ where these two functions intersect. Note that from Lemma \ref{lem:BR_B}, $p_A^\dagger$ is the unique value that satisfies 
\begin{equation}
    h(M_A-c_Ap_A^\dagger,p_A^\dagger) = \frac{M_B}{c_B}.
\end{equation}
Player $B$'s best-response is $p_B^* = \hat{p}_B$ for $p_A < p_A^\dagger$, and $p_B^* = 0$ for $p_A > p_A^\dagger$  (Lemma \ref{lem:BR_B}). Consequently, player $A$'s payoff (under player $B$'s best-response) is given by
\begin{equation}
    u_A(p_A,p_B^*) = 
    \begin{cases}
        1 - u_B^{1B}(p_A,\hat{p}_B), &\text{if } p_A \in [0,p_A^\dagger) \\
        1 - u_B^{1A}(p_A,0), &\text{if } p_A \in (p_A^\dagger,\frac{M_A}{c_A}]
    \end{cases}
\end{equation}
We observe that $u_A(p_A,p_B^*)$ is increasing on $p_A \in [0,p_A^\dagger)$. If $p_A^\dagger \geq \bar{p}_A$, then $u_A(p_A,p_B^*)$ is decreasing on $(p_A^\dagger,\frac{M_A}{c_A}]$, and hence player $A$'s security strategy is $p_A^* = p_A^\dagger$. The resulting payoff to player $A$ is 
\begin{equation}
    1 - u_B^{1B}(p_A^\dagger,\hat{p}_B) = \frac{c_A(2-c_B)}{2}\frac{\frac{M_A}{c_A}-p_A^\dagger}{\frac{M_B}{c_B}-p_A^\dagger}.
\end{equation}

If $p_A^\dagger < \bar{p}_A$, then $u_A(p_A,p_B^*)$ is increasing on $(p_A^\dagger,\bar{p}_A]$ and decreasing on $(\bar{p}_A,\frac{M_A}{c_A}]$. Hence, player $A$'s security strategy is $p_A^* = \bar{p}_A = \frac{2(1-c_A)}{2-c_A}\frac{M_A}{c_A}$. The resulting payoff to player $A$ is given by Corollary 4.1.

\vspace{1mm}

\noindent 3) Suppose $\frac{M_B}{c_B} > \frac{M_A}{c_A}$.

In this case, the function $u_B^{1B}(p_A,\hat{p}_B)$ is strictly increasing on $p_A \in [0,\frac{M_A}{c_A}]$. Any intersection (if any) with $u_B^{1A}(p_A,0)$ must occur on $p_A > \bar{p}_A$, where $u_B^{1A}(p_A,0)$ is increasing. Therefore, $u_A(p_A,p_B^*)$ must be strictly decreasing on $[0,\frac{M_A}{c_a}]$ and hence $p_A^* = 0$. The resulting payoff to player $A$ is 
\begin{equation}
    1 - u_B^{1B}(0,\hat{p}_B) =  \frac{c_B(2-c_B)}{2}\frac{M_A}{M_B}.
\end{equation} 
\end{proof}

\section{Conclusion}

In this manuscript, we studied the strategic role of pre-allocations in competitive interactions under a two-stage General Lotto game model, where one of the players can place resources before a decisive point of conflict. Our main contribution fully provided subgame-perfect equilibrium characterizations to this formulation. This result revealed a rich interplay between the effectiveness of pre-allocated resources and real-time resources, which allowed us to quantify optimal investments in both types of resources. We then analyzed a Stackelberg game scenario where the other player is able to respond to the pre-allocation before the final decisive round. This highlights the significance that more dynamic and sequential interactions can have on a player's eventual performance. Future work will involve studying these dynamic interactions in richer environmental contexts, e.g. with multiple fronts of battlefields.

\bibliographystyle{abbrv}
\bibliography{source}

\begin{thebibliography}{10}

\bibitem{Aidt_2019}
T.~Aidt, K.~A. Konrad, and D.~Kovenock.
\newblock Dynamics of conflict.
\newblock {\em European journal of political economy}, (60):101838, 2019.

\bibitem{brown2006defending}
G.~Brown, M.~Carlyle, J.~Salmer{\'o}n, and K.~Wood.
\newblock Defending critical infrastructure.
\newblock {\em Interfaces}, 36(6):530--544, 2006.

\bibitem{chandan2022strategic}
R.~Chandan, K.~Paarporn, M.~Alizadeh, and J.~R. Marden.
\newblock Strategic investments in multi-stage general lotto games.
\newblock In {\em 2022 IEEE 61st Conference on Decision and Control (CDC)},
  pages 4444--4448. IEEE, 2022.

\bibitem{chandan2023art}
R.~Chandan, K.~Paarporn, D.~Kovenock, M.~Alizadeh, and J.~R. Marden.
\newblock The art of concession in general lotto games.
\newblock In {\em Game Theory for Networks: 11th International EAI Conference,
  GameNets 2022, Virtual Event, July 7--8, 2022, Proceedings}, pages 310--327.
  Springer, 2023.

\bibitem{chowdhary2019adaptive}
A.~Chowdhary, S.~Sengupta, A.~Alshamrani, D.~Huang, and A.~Sabur.
\newblock Adaptive mtd security using markov game modeling.
\newblock In {\em 2019 International Conference on Computing, Networking and
  Communications (ICNC)}, pages 577--581. IEEE, 2019.

\bibitem{dechenaux2015survey}
E.~Dechenaux, D.~Kovenock, and R.~M. Sheremeta.
\newblock A survey of experimental research on contests, all-pay auctions and
  tournaments.
\newblock {\em Experimental Economics}, 18:609--669, 2015.

\bibitem{etesami2019dynamic}
S.~R. Etesami and T.~Ba{\c{s}}ar.
\newblock Dynamic games in cyber-physical security: An overview.
\newblock {\em Dynamic Games and Applications}, 9(4):884--913, 2019.

\bibitem{Gross_1950}
O.~Gross and R.~Wagner.
\newblock {A continuous {Colonel} {Blotto} game}.
\newblock Technical report, RAND Project, Air Force, Santa Monica, 1950.

\bibitem{Gupta_2014b}
A.~Gupta, T.~Ba{\c s}ar, and G.~Schwartz.
\newblock A three-stage {Colonel} {Blotto} game: when to provide more
  information to an adversary.
\newblock In {\em International Conference on Decision and Game Theory for
  Security}, pages 216--233. Springer, 2014.

\bibitem{Gupta_2014a}
A.~{Gupta}, G.~{Schwartz}, C.~{Langbort}, S.~S. {Sastry}, and T.~{Ba{\c s}ar}.
\newblock A three-stage colonel blotto game with applications to cyberphysical
  security.
\newblock In {\em 2014 American Control Conference}, pages 3820--3825, 2014.

\bibitem{Hart_2008}
S.~Hart.
\newblock Discrete {{Colonel} {Blotto} and {General} {Lotto}} games.
\newblock {\em International Journal of Game Theory}, 36(3-4):441--460, 2008.

\bibitem{Hart_2016}
S.~Hart.
\newblock Allocation games with caps: from captain lotto to all-pay auctions.
\newblock {\em International Journal of Game Theory}, 45(1):37--61, 2016.

\bibitem{Isaacs_1965}
R.~Isaacs.
\newblock {\em Differential Games: A Mathematical Theory with Applications to
  Warfare and Pursuit, Control and Optimization}.
\newblock Wiley, 1965.

\bibitem{Kartik_2021asymmetric}
D.~Kartik and A.~Nayyar.
\newblock Upper and lower values in zero-sum stochastic games with asymmetric
  information.
\newblock {\em Dynamic Games and Applications}, 11(2):363--388, 2021.

\bibitem{konrad2009strategy}
K.~A. Konrad et~al.
\newblock Strategy and dynamics in contests.
\newblock {\em OUP Catalogue}, 2009.

\bibitem{kovenock2008electoral}
D.~Kovenock and B.~Roberson.
\newblock Electoral poaching and party identification.
\newblock {\em Journal of Theoretical Politics}, 20(3):275--302, 2008.

\bibitem{Kovenock_2012}
D.~Kovenock and B.~Roberson.
\newblock Coalitional colonel blotto games with application to the economics of
  alliances.
\newblock {\em Journal of Public Economic Theory}, 14(4):653--676, 2012.

\bibitem{Kovenock_2020}
D.~Kovenock and B.~Roberson.
\newblock Generalizations of the {General} {Lotto} and {Colonel} {Blotto}
  games.
\newblock {\em Economic Theory}, pages 1--36, 2020.

\bibitem{Leon_2021}
V.~Leon and S.~R. Etesami.
\newblock Bandit learning for dynamic colonel blotto game with a budget
  constraint.
\newblock In {\em 2021 60th IEEE Conference on Decision and Control (CDC)},
  pages 3818--3823, 2021.

\bibitem{Li_2020}
L.~Li and J.~S. Shamma.
\newblock Efficient strategy computation in zero-sum asymmetric information
  repeated games.
\newblock {\em IEEE Transactions on Automatic Control}, 65(7):2785--2800, 2020.

\bibitem{Nayyar_2013}
A.~Nayyar, A.~Gupta, C.~Langbort, and T.~Ba{\c{s}}ar.
\newblock Common information based markov perfect equilibria for stochastic
  games with asymmetric information: Finite games.
\newblock {\em IEEE Transactions on Automatic Control}, 59(3):555--570, 2013.

\bibitem{Paarporn_2021_budget}
K.~Paarporn, R.~Chandan, M.~Alizadeh, and J.~R. Marden.
\newblock A {General} {Lotto} game with asymmetric budget uncertainty.
\newblock {\em arXiv preprint arXiv:2106.12133}, 2021.

\bibitem{Paarporn2021strategically}
K.~Paarporn, R.~Chandan, D.~Kovenock, M.~Alizadeh, and J.~R. Marden.
\newblock Strategically revealing intentions in {General} {Lotto} games.
\newblock {\em arXiv preprint arXiv:2110.12099}, 2021.

\bibitem{Roberson_2006}
B.~Roberson.
\newblock The {Colonel} {Blotto} game.
\newblock {\em Economic Theory}, 29(1):1--24, 2006.

\bibitem{sela2023resource}
A.~Sela.
\newblock Resource allocations in the best-of-k (k= 2, 3) contests.
\newblock {\em Journal of Economics}, pages 1--26, 2023.

\bibitem{shishika2022dynamic}
D.~Shishika, Y.~Guan, M.~Dorothy, and V.~Kumar.
\newblock Dynamic defender-attacker blotto game.
\newblock In {\em 2022 American Control Conference (ACC)}, pages 4422--4428.
  IEEE, 2022.

\bibitem{VonMoll_2020}
A.~Von~Moll, M.~Pachter, D.~Shishika, and Z.~Fuchs.
\newblock Guarding a circular target by patrolling its perimeter.
\newblock In {\em 2020 59th IEEE Conference on Decision and Control (CDC)},
  pages 1658--1665, 2020.

\bibitem{Vu_EC2021}
D.~Q. Vu and P.~Loiseau.
\newblock Colonel blotto games with favoritism: Competitions with
  pre-allocations and asymmetric effectiveness.
\newblock In {\em Proceedings of the 22nd ACM Conference on Economics and
  Computation}, page 862–863, 2021.

\bibitem{Vu_2019combinatorial}
D.~Q. Vu, P.~Loiseau, and A.~Silva.
\newblock Combinatorial bandits for sequential learning in colonel blotto
  games.
\newblock In {\em 2019 IEEE 58th Conference on Decision and Control (CDC)},
  pages 867--872. IEEE, 2019.

\bibitem{yildirim2005contests}
H.~Yildirim.
\newblock Contests with multiple rounds.
\newblock {\em Games and Economic Behavior}, 51(1):213--227, 2005.

\bibitem{zhang2013protecting}
C.~Zhang and J.~E. Ramirez-Marquez.
\newblock Protecting critical infrastructures against intentional attacks: A
  two-stage game with incomplete information.
\newblock {\em {IIE} Transactions}, 45(3):244--258, 2013.

\bibitem{zhu2013game}
Q.~Zhu and T.~Ba{\c{s}}ar.
\newblock Game-theoretic approach to feedback-driven multi-stage moving target
  defense.
\newblock In {\em International conference on decision and game theory for
  security}, pages 246--263. Springer, 2013.

\bibitem{zhuang2014towards}
R.~Zhuang, S.~A. DeLoach, and X.~Ou.
\newblock Towards a theory of moving target defense.
\newblock In {\em Proceedings of the first ACM workshop on moving target
  defense}, pages 31--40, 2014.

\end{thebibliography}

\appendix

\subsection{Proof of Part 2-b}

Here, we present the proof of Part 2-b from the proof outline of Theorem \ref{thm:equilibrium_characterization}. It states that: \emph{Any pre-allocation $\bs{p}$ that corresponds to a solution of \eqref{eq:SOE} with $\mcal{B}_1,\mcal{B}_2 \neq \varnothing$ satisfies $\pi_A(\bs{p},R_A,R_B) \leq \pi_A(\bs{p}^*,R_A,R_B)$.}


Throughout the proof, we will use the short-hand notation $W_j=\sum_{b\in\mcal{B}_j} w_b$, $P_j=\sum_{b\in\mcal{B}_j} p_b$ and $\bs{p}_j=(p_b)_{b\in\mcal{B}_j}$, for $j=1,2$. For $\bs{p}\in\Delta_n(P)$, we obtain the system of equations
\begin{equation*}
\begin{aligned}
    R_A &= \sum_{b\in\mcal{B}_1} \frac{(w_b \kappa_B-p_b)^2}{2 w_b \kappa_B} 
            + \sum_{b\in\mcal{B}_2} \frac{(w_b \kappa_A)^2}{2 w_b \kappa_B}, \\
    R_B &= \sum_{b\in\mcal{B}_1} \frac{( w_b \kappa_B)^2-(p_b)^2}{2  w_b \kappa_A} 
            + \sum_{b\in\mcal{B}_2} \frac{(w_b \kappa_A+p_b)^2-(p_b)^2}{2  w_b \kappa_A},
\end{aligned}
\end{equation*}
where $0<w_b\kappa_B-p_b\leq w_b\kappa_A$ holds for all $b\in\mcal{B}_1$, and $ w_b \kappa_B-p_b>w_b\kappa_A$ holds for all $b\in\mcal{B}_2$.  The system of equations readily gives the expression:
\begin{equation} \label{eq:mixedcases_systemeq}
\begin{aligned}
    W_1 \kappa_B^2 + W_2 \kappa_A^2 &= 2\kappa_B (X_A+P_2) - ||\bs{p}_1||^2_w \\
    &= 2\kappa_A (X_B-P_2) + ||\bs{p}_1||^2_w,
\end{aligned}
\end{equation}
where recall that $||\bs{p}_1||^2_w = \sum_{b\in\mcal{B}_2} [(p_b)^2/w_b]$.  The solution to the above system of equations is
\begin{equation} \label{eq:mixedcases_solution}
\begin{aligned}
    \kappa^*_B &= \frac{C_1 H_2 \pm \sqrt{C_2^2H_1H_2}}{W_1 C_2^2 + W_2 C_1^2}, \\
    \kappa^*_A  &= \frac{C_2 H_1 \pm \sqrt{C_1^2H_1H_2}}{W_1 C_2^2 + W_2 C_1^2},
\end{aligned}
\end{equation}
where we denote $C_1:=R_A+P_1$, $C_2:=R_B-P_2$, $H_1:=C_1^2-W_1||\bs{p}_1||^2_{\bs{w}}$ and $H_2:=C_2^2+W_2||\bs{p}_1||^2_{\bs{w}}$.  We consider only the scenario where $\pm=+$ in \eqref{eq:mixedcases_solution}, since the expression for $\kappa^*_A$ is strictly negative when $\pm=-$.  Simply observe that $C_1>0$, $C_1^2 > H_1$, $0<C_2^2 < H_2$ and, thus, that either (i) $H_1>0$, $C_2>0$ and $0<C_2 H_1 < C_1\sqrt{H_1 H_2}$, (ii) $H_1<0$, $C_2<0$ and $0< C_2 H_1 = |C_2| |H_1| < C_1 \sqrt{|H_1| |H_2|}$, or (iii) only one of $H_1$ or $C_2$ is negative, in which case $C_2 H_1 < 0$.

Substituting \eqref{eq:mixedcases_solution} into \eqref{eq:playerA_payoff} and simplifying, we obtain
\begin{equation}
\begin{aligned}
    \pi_A(\bs{p},R_A,R_B) 
    &= W_1 + \frac{\sqrt{H_1H_2}-C_1C_2}{||\bs{p}_1||^2_{\bs{w}}},
\end{aligned}
\end{equation}
and the partial derivatives of $\pi_A(\bs{p},R_A,R_B)$ with respect to $p_b$ are as follows. For $b\in\mcal{B}_1$,
\begin{equation}\label{eq:mixed_partials1}
    \begin{aligned}
        \frac{\partial \pi_A}{\partial p_b} &= 
        \frac{-p_b/w_b}{(||\bs{p}_1||^2_{\bs{w}})^2\sqrt{H_1H_2}} (C_1\sqrt{H_2}-C_2\sqrt{H_1})^2 \\
        &\quad +\frac{1}{||\bs{p}_1||^2_{\bs{w}}\sqrt{H_1}}(C_1\sqrt{H_2}-C_2\sqrt{H_1})
    \end{aligned}
\end{equation}
and for $b\in\mcal{B}_2$,
\begin{equation}\label{eq:mixed_partials2}
    \frac{\partial  \pi_A}{\partial p_b} = 
        \frac{1}{||\bs{p}_1||^2_{\bs{w}}\sqrt{H_2}}(C_1\sqrt{H_2}-C_2\sqrt{H_1}).
\end{equation}
We first consider critical points $\bs{p}$ strictly in the interior of $\Delta_n(P)$, and resolve the points on the boundary later.  One necessary condition for a critical point is that $(\frac{\partial}{\partial p_b} - \frac{\partial}{\partial p_c})\pi_A = 0$ for all $b\in\mcal{B}_1$ and $c\in\mcal{B}_2$.  Firstly, observe that $C_1 > \sqrt{H_1}$ and $\sqrt{H_2}>C_2$, and, thus, it must be that $C_1\sqrt{H_2}-C_2\sqrt{H_1}>0$.  We can thus divide the expression $\frac{\partial \pi_A}{\partial p_b} = \frac{\partial \pi_A}{\partial p_c}$ on both sides by $C_1\sqrt{H_2}-C_2\sqrt{H_1}$ and rearrange to obtain
\[ (p_b/w_b) (C_1\sqrt{H_2}-C_2\sqrt{H_1})
        = ||\bs{p}_1||^2_{\bs{w}} (\sqrt{H_2} - \sqrt{H_1}) > 0. \]
Observe that the left-hand side is strictly greater than zero, and, thus, the right-hand side must be as well.  This immediately requires $\sqrt{H_2}-\sqrt{H_1}>0$, since $||\bs{p}_1||^2_{\bs{w}}>0$.  Re-arranging the above expression, note that we also require
\[ \sqrt{H_1} [C_2 (p_b/w_b) - ||\bs{p}_1||^2_{\bs{w}}] = \sqrt{H_2} [ C_1 (p_b/w_b) - ||\bs{p}_1||^2_{\bs{w}}]. \]
Since we have just shown that $\sqrt{H_2}>\sqrt{H_1}$ must hold, it follows that each $b\in\mcal{B}_1$ satisfies either (i) $C_2 (p_b/w_b) - ||\bs{p}_1||^2_{\bs{w}}<C_1 (p_b/w_b) - ||\bs{p}_1||^2_{\bs{w}}<0$; or (ii) $C_2 (p_b/w_b) - ||\bs{p}_1||^2_{\bs{w}} > C_1 (p_b/w_b) - ||\bs{p}_1||^2_{\bs{w}} > 0$.  
Observe that $C_1 (p_b/w_b) > ||\bs{p}_1||^2_1$ must hold for $b' \in \arg\,\max_{b\in\mcal{B}_1} p_b/w_b$, and thus $b'$ must satisfy scenario (ii) and $C_2>C_1$ (or, equivalently, $R_B-P>R_A$). This last inequality then implies that scenario (ii) must be satisfied for all $b\in\mcal{B}_1$.

We have shown that, in order for $(\frac{\partial}{\partial p_b} - \frac{\partial}{\partial p_c}) \pi_A=0$ to hold for all $b\in\mcal{B}_1$ and $c\in\mcal{B}_2$, a critical point $\bs{p}$ must satisfy
\[ \frac{p_b}{w_b} = \bar p := \frac{\sqrt{H_2} - \sqrt{H_1}}{C_1\sqrt{H_2}-C_2\sqrt{H_1}}||\bs{p}_1||^2_w, \]

\noindent for each $b\in\mcal{B}_1$.  Expanding this expression, and solving for $\bar p$ explicitly, we obtain the following two possible (real) solutions for $\bar p$:
\[ \bar p = 0 \text{ or } \bar p = \frac{2( R_B - P)(q R_B-R_A-P)}{R_A}, \]
where we use $P_1=W_1 \bar p$, $P_2=P-P_1$, and $||\bs{p}_1||^2_{\bs{w}}=W_1 \bar{p}^2$.  As $\bar p=0$ is inadmissible, we consider the latter expression for $\bar p$.  After inserting this expression for $\bar p$ into the right-hand side of \eqref{eq:case3_mostrestrictive}, where $\max_b\{p_b/w_b\}=\bar p$, we obtain
\begin{equation*}
    \frac{R_A}{2}\left[1 + \sqrt{1+\frac{2W}{R_A}\bar p} \right] = \frac{R_A}{2} + R_B-P - \frac{R_A}{2} = R_B -P,
\end{equation*}
which follows since we showed above that $R_B-P>R_A$ must hold.  Thus, the only critical point sits at the boundary of the region where all battlefields are in Case 2, since decreasing $\bar p$ even slightly will satisfy the condition in \eqref{eq:case3_mostrestrictive}.  We can further verify that the payoff at this critical point is equal to the constant payoff in the region where all battlefields are in Case 2, but omit this for conciseness.

We conclude the proof by resolving the scenario where $\bs{p}$ lies on the boundaries of $\Delta_n(P)$.  Observe that the conditions on $q\kappa^*_B$ and $\kappa^*_A$ immediately imply that $p_b/w_b > p_c/w_c$ for any $b\in\mcal{B}_1$ and $c\in\mcal{B}_2$.  Thus, on the boundaries of $\Delta_n(P)$, it must either be that all battlefields with $p_b=0$ (and possibly more) are in Case 2, or that all battlefields in $\mcal{B}$ are in Case 1 (which is covered by Lemma \ref{lem:global_maximizer}, part 2-a).  

In the scenario where all battlefields with $p_b=0$ are in Case 2, note that the necessary condition $(\frac{\partial}{\partial p_i} -\frac{\partial}{\partial p_j}) \pi_A \geq 0$ for $i\in\arg\,\min_{b\in\mcal{B}_1} \{p_b/w_b\}$ and $j\in\arg\,\max_{b\in\mcal{B}_1} \{p_b/w_b\}$ only holds with equality if $p_b/w_b = P_1/W_1$ for all $b\in\mcal{B}_1$.  If $P_1/W_1 < \bar p$, then the inequality in \eqref{eq:case3_mostrestrictive} is satisfied implying that all battlefields are in Case 2, and Lemma \ref{lem:global_maximizer} shows that $\bs{p}^*$ must correspond with the same payoff to player $A$.  Otherwise, if $P_1/W_1 = \bar p$, then we showed above that the global maximum sits at the boundary where all battlefields are in Case 2 and $\bs{p}^*$ achieves the same payoff.  

Finally, if $P_1/W_1 > \bar p$, then, from \eqref{eq:mixed_partials1} and \eqref{eq:mixed_partials2}, we know that $(\frac{\partial}{\partial p_b} -\frac{\partial}{\partial p_c}) \pi_A < 0$ must hold for all $b\in\mcal{B}_1$ and $c\in\mcal{B}_2$, since the choice $p_b/w_b = \bar p$ satisfies $(\frac{\partial}{\partial p_b} -\frac{\partial}{\partial p_c}) \pi_A = 0$, and $\frac{\partial \pi_A}{\partial p_b}$ is decreasing with respect to $p_b/w_b$ while $\frac{\partial \pi_A}{\partial p_c}$ is constant.  This violates a necessary condition for a critical point, and implies that $A$'s payoff is increasing in the direction of decreasing $p_b$ and increasing $p_c$, as expected.  




\subsection{Proof of Lemma \ref{lem:uB_pB}}\label{sec:follower_lemma}

We prove the result in two parts, considering the two separate intervals $p_B \in[0,p_A]$ and $p_B \in(p_A,\frac{M_B}{c_B}]$.

\noindent\textbf{\underline{Part 1}: } $p_B \in[0,p_A]$. Define the function
\begin{equation}
    G(p_B) = \frac{2(M_B - p_A + p_B(1-c_B))^2}{2(M_B - p_A + p_B(1-c_B)) + p_A - p_B}
\end{equation}
The definition of $\mc{R}^{1A}$ may be restated as
\begin{equation}
    \mc{R}^{1A} = \left\{ p_B < p_0, \text{ or } p_B \geq p_0 \text{ and } G(p_B) < R_A \right\}.
\end{equation}
where $p_0 = \frac{p_A-M_B}{1-c_B}$.

\noindent $\bullet$ Suppose $\frac{M_B}{c_B} \leq R_A+p_A$. Since $M_B \leq c_Bp_A$, we have $p_0 > p_A$, and consequently $u_B(\tbfp) = u_B^{1A}(\tbfp)$ $\forall p_B \in [0,p_A]$. The sign of $\frac{\partial u_B^{1A}}{\partial p_B}$ is equivalent to the sign of 
\begin{equation}
    \begin{aligned}
        -f_A(p_B)&\left[ c_B(2p_A^2+R_A(2R_A-3p_B)) \right. \\ 
        &+ p_A(c_B(5R_A-2p_B)-2M_B) \\
        &+(f_A(p_B)-R_A)(c_B(3p_A+2R_A-p_B)-2M_B) \\ 
        &\left.- 2M_B(R_A-p_B) \right]
    \end{aligned}
\end{equation}
where $f_A(p_B) = R_A + \sqrt{R_A(R_A+2(p_A-p_B))}$. It then holds that $\frac{\partial u_B^{1A}}{\partial p_B}(p_A,p_B) \leq 0$ for $p_B \in [0,p_A]$ if and only if $\frac{M_B}{c_B} \leq R_A+p_A$. It follows that $u_B$ is decreasing on the interval $p_B \in [0,p_A]$.

\vspace{1mm}

\noindent $\bullet$ Suppose $R_A+p_A < \frac{M_B}{c_B} \leq \frac{R_A}{c_B} + p_A$. First we consider the case that $\frac{R_A}{c_B} + p_A < \frac{p_A}{c_B}$. Here, $p_0 \in [0,p_A)$ and thus we immediately have $u_B(\tbfp) = u_B^{1A}(\tbfp)$ for $p_B \in [0,p_0)$. We also have $G(p_0) = 0$, $G(p_B)$ is strictly increasing on $p_B \in (p_0,p_A)$, and $G(p_A) < R_A$ (from $M_B \leq R_A + c_BpA$). Therefore, $G(p_B) < R_A$ for all $p_B \in (p_0,p_A]$, and hence $u_B(\tbfp) = u_B^{1A}(\tbfp)$.

Now, we consider the case that $\frac{R_A}{c_B} + p_A \geq \frac{p_A}{c_B}$. The range $p_A < \frac{M_B}{c_B} \leq \frac{p_A}{c_B}$ follows the above argument identically. Now, when $\frac{p_A}{c_B} < \frac{M_B}{c_B} \leq \frac{R_A}{c_B} + p_A$, we have $p_0 < 0$, $G(p_B)$ is strictly increasing on $[0,p_A]$, and $G(p_A) < R_A$. Therefore, $G(p_B) < R_A$ for all $p_B \in (p_0,p_A]$, and hence $u_B(\tbfp) = u_B^{1A}(\tbfp)$.

\noindent $\bullet$ Suppose $\frac{R_A}{c_B} + p_A < \frac{M_B}{c_B} \leq \frac{1}{c_B}\left(p_A + \frac{R_A + \sqrt{R_A(R_A + 2p_A)}}{2}\right)$. First we consider the case $\frac{R_A}{c_B} + p_A < \frac{p_A}{c_B}$. In the range $\frac{R_A}{c_B} + p_A < \frac{M_B}{c_B} \leq \frac{p_A}{c_B}$, we have $p_0 \in [0,p_A)$ and thus we immediately have $u_B(\tbfp) = u_B^{1A}(\tbfp)$ for $p_B \in [0,p_0)$. We also have $G(p_0) = 0$ and $G(p_B)$ is strictly increasing on $p_B \in (p_0,p_A)$. Now, the condition $\frac{R_A}{c_B} + p_A < \frac{M_B}{c_B}$ asserts that $G(p_A) > R_A$. Thus, there is a unique value, $p_B^{1A} \in (p_0,p_A)$ that solves $G(p_B^{1A}) = R_A$. 

In the range $\frac{p_A}{c_B} < \frac{M_B}{c_B} \leq \frac{1}{c_B}\left(p_A + \frac{R_A + \sqrt{R_A(R_A + 2p_A)}}{2}\right)$, $G(p_B)$ is strictly increasing, $G(0) < R_A$, and $G(p_A) > R_A$ which yields the result. To see this, we first note that $p_0 < 0$. To show $G$ is strictly increasing, the sign of $G'(p_B)$ is equal to the sign of $ (2c_B^2-3c_B+1)p_B + (3-2c_B)M_B - p_A$ (from $M_B > p_A$). Observe that this value is linear in $p_B$. At $p_B = 0$, it is $(3-2c_B)M_B - p_A > 0$, and at $p_B = p_A$, it is $(3-2c_B)(M_B-c_Bp_A) > 0$. Thus, $G'(p_B) > 0$ for $p_B \in (0,p_A)$. 
The condition $G(0) = \frac{2(M_B-p_A)^2}{2(M_B-p_A)+p_A} < R_A$ is equivalent to $M_B \leq \frac{1}{c_B}\left(p_A + \frac{R_A + \sqrt{R_A(R_A + 2p_A)}}{2}\right)$.

The case $R_A + c_Bp_A \geq p_A$ follows identical arguments to the above.

\noindent $\bullet$ Suppose $\frac{M_B}{c_B} > \frac{1}{c_B}\left(p_A + \frac{R_A + \sqrt{R_A(R_A + 2p_A)}}{2}\right)$. Here, $p_0 < 0$, $G(p_B)$ is strictly increasing on $(0,p_A)$, and $G(p_A) < R_A$. This yields the result.

\noindent\textbf{\underline{Part 2}: } $p_B \in(p_A,\frac{M_B}{c_B}]$. Define the functions 
    \begin{equation}
        \begin{aligned}
            F(p_B) &:= \frac{2(t_A-p_B)^2}{2R_A+p_A-p_B} \\
            L(p_B) &:= M_B - c_Bp_B
        \end{aligned}
    \end{equation}
    where $t_A := R_A + p_A$. The definition of $\mc{R}^{1B}$ may be restated as
    \begin{equation}
        \mc{R}^{1B} = \{ p_B > t_A, \text{ or } p_B \leq t_A \text{ and } F(p_B) \leq L(p_B) \}.
    \end{equation}
    
    \noindent $\bullet$ Suppose $\frac{M_B}{c_B} \leq t_A$. Here, $F(p_B)$ is strictly decreasing on $(p_A,\frac{M_B}{c_B})$, $F(p_A) = R_A > L(p_A)$, and $F(\frac{M_B}{c_B}) > L(\frac{M_B}{c_B} = 0$. The solutions to the equation $F(p_B) = L(p_B)$ are given by
    \begin{equation}
        \footnotesize 
        \begin{aligned}
            &r_\pm(M_B) := \frac{1}{2(2-c_B)}\left[2t_A(2-c_B) \!-\! L(p_A) \right. \\ 
            &\left. \pm \sqrt{(2t_A(2-c_B) - L(p_A))^2 \!-\! 4(2-c_B)(2t_A(t_A\!-\!M_B) + M_Bp_A)} \right]
        \end{aligned}
    \end{equation}
    The roots are complex if and only if $M_B < c_Bp_A - 2(2-c_B)R_A + 2R_A\sqrt{2(2-c_B)}$, in which case $u_B = u_B^{2B}$ for all $p_B \in [p_A,\frac{M_B}{c_B}]$ (since $F(p_B) > L(p_B)$).
    
    So, we consider $c_Bp_A - 2(2-c_B)R_A + 2R_A\sqrt{2(2-c_B)} \leq M_B \leq c_Bt_A$, in which the roots are real-valued, and only $r_+$ can be in the interval $(p_A,\frac{M_B}{c_B})$. We thus have
    \begin{equation}
        u_B(\tbfp) =
        \begin{cases}
            u_B^{2B}(\tbfp), &p_B \in [p_A,r_-] \\
            u_B^{1B}(\tbfp), &p_B \in (r_-,r_+] \\
            u_B^{2B}(\tbfp), &p_B \in (r_+,\frac{M_B}{c_B}]
        \end{cases}
    \end{equation}
    
    Both payoff functions $u_B^{2B}$ and $u_B^{1B}$ are decreasing in $p_B$ in their respective intervals. Indeed,
    \begin{equation}
        \frac{\partial u_B^{2B}}{\partial p_B} = \frac{L(p_B) - c_B(t_A - p_B)}{2(t_A - p_B)^2} \leq 0
    \end{equation} 
    where the inequality is due to $M_B \leq c_Bt_A$. 
    
    To show $u_B^{1B}$ is decreasing, we observe that the payoff $u_B^{1B}$ admits a single critical point in the interval $(p_A,\frac{M_B}{c_B})$. Indeed, denoting $\ell(p_B) = \sqrt{L(p_B)} \geq 0$ and $g(p_B) = \sqrt{L(p_B) + 2(p_B-p_A)} > 0$, we can calculate
    \begin{equation}
        \frac{\partial u_B^{1B}}{\partial p_B} = -\frac{R_A}{2}\frac{(\ell+g)(c_B(\ell+g) - 2\ell)}{\ell g(\ell g+L(p_B) + p_B - p_A)^2}.
    \end{equation}
    It holds that $\frac{\partial u_B^{1B}}{\partial p_B}(p_A) > 0$, $\frac{\partial u_B^{1B}}{\partial p_B}(\frac{M_B}{c_B}) < 0$, and $\frac{\partial u_B^{1B}}{\partial p_B}(p_B) = 0$ if and only if 
    \begin{equation}
        p_B = \hat{p}_B := \frac{M_B}{c_B} - \frac{M_B-c_Bp_A}{2-c_B} \in (p_A,\frac{M_B}{c_B}).
    \end{equation}
    Thus, $\hat{p}_B$ is a local maximizer, with $u_B^{1B}$ decreasing for $p_B > \hat{p}_B$. Now, observe that $r_-(c_Bt_A) = \hat{p}_B$, $r_+(c_Bt_A) = \frac{M_B}{c_B}$, $r_-(M_B)$ is decreasing in $M_B$, and $\hat{p}_B$ is increasing in $M_B$. We have $\hat{p}_B \leq r_-$, and therefore $u_B(\tbfp)$ is decreasing for all $p_B \in [p_A,\frac{M_B}{c_B}]$.
        
    \noindent $\bullet$ Suppose $t_A < \frac{M_B}{c_B} \leq \frac{R_A}{c_B}+p_A$. Here,  $u_B = u_B^{1B}$ for $p_B \in (t_A,\frac{M_B}{c_B})$. The function $F(p_B)$ is strictly decreasing on $(p_A,t_A)$ with $F(p_A) = R_A \geq L(p_A)$ and $F(t_A) = 0$. Thus, we have $F(p_B) > L(p_B)$ for $p_B \in [p_A,p_B^{1B}]$ and $F(p_B) \leq L(p_B)$ for $p_B \in [p_B^{1B},t_A]$, where $p_B^{1B}$ is the unique value in the interval $[p_A,t_A)$ that satisfies $F(p_B) = L(p_B)$. We thus get the stated characterization for $u_B(\tbfp)$.
        
    \noindent $\bullet$ Suppose $\frac{R_A}{c_B}+p_A < M_B$. Here, $u_B = u_B^{1B}$ for $p_B \in (t_A,\frac{M_B}{c_B})$. The function $F(p_B)$ is strictly decreasing on $(p_A,t_A)$ with $F(p_A) = R_A < M_B - c_Bp_A$ and $F(t_A) = 0$. Thus, we have $F(p_B) < M_B - c_Bp_B$ for $p_B \in [p_A,t_A]$. Therefore, $u_B = u_B^{1B}$ for all $p_B \in [p_A,\frac{M_B}{c_B}]$.
    
    From identical arguments from the first bullet point ($\frac{M_B}{c_B} \leq t_A$), we know that $u_B^{1B}$ admits the local maximizer $\hat{p}_B \in (p_A,\frac{M_B}{c_B})$. Combining the characterizations from Part 1 and Part 2 yields the result. 

\subsection{Proof of Lemma \ref{lem:BR_B}}\label{sec:BR_B}

We first address the two extreme lower and upper intervals. We will denote $f_A(p_B) = R_A + \sqrt{R_A(R_A+2(p_A-p_B))}$.
    \begin{itemize}[leftmargin=*]
        \item Suppose $\frac{M_B}{c_B} \leq R_A+p_A$. By Lemma \ref{lem:BR_B} $u_B(p_A,p_B)$ is  decreasing for all $p_B \in [0,\frac{M_B}{c_B}]$, and hence $\max_{p_B \in [0,\frac{M_B}{c_B}]} u_B(p_A,p_B) = u_B(0)$.
        
        \item Suppose $p_A+\frac{f_A(0)}{2} < \frac{M_B}{c_B}$. Here, $u_B$ is strictly increasing on $p_B \in [0,p_A)$, and the maximum value in the interval $p_B \in (p_A,\frac{M_B}{c_B}]$ is given by $u_B^{1B}(\hat{p}_B)$. Therefore,  
        \begin{equation}
            \max_{p_B \in [0,\frac{M_B}{c_B}]} u_B(p_B) = u_B^{1B}(\hat{p}_B).
        \end{equation}

        \item Now, consider the middle interval $R_A+p_A < \frac{M_B}{c_B} \leq p_A+\frac{f_A(0)}{2}$. By Lemma \ref{lem:uB_pB}, $u_B(0) = u_B^{1A}(0)$. Here, $u_B$ is decreasing at $p_B = 0$ but is increasing at $p_B = p_A$ (whether in region $\mc{R}^{1A}$ or $\mc{R}^{2A}$). By Lemma \ref{lem:uB_pB}, the maximum value in the interval $p_B \in (p_A,\frac{M_B}{c_B}]$ is given by $u_B^{1B}(\hat{p}_B)$.  Therefore, 
        \begin{equation}
            \max_{p_B \in [0,\frac{M_B}{c_B}]} u_B(p_B) = \max\{u_B^{1A}(0),u_B^{1B}(\hat{p}_B) \}.
        \end{equation}
        
        We can identify the existence of a threshold value $h(R_A,p_A) \in (R_A+p_A,p_A+\frac{f_A(0)}{2})$ for which $p_B^* = 0$ for $\frac{M_B}{c_B}< h(p_A)$, $p_B^* = \hat{p}_B$ for $\frac{M_B}{c_B}> h(p_A)$, and $p_B^* = \hat{p}_B$ or $0$ for $\frac{M_B}{c_B}= h(p_A)$. Indeed the payoff $u_B^{1B}(\hat{p}_B)$ may be written as
        \begin{equation}
            u_B^{1B}(\hat{p}_B) = 1 - \frac{c_B(2-c_B)}{2}\frac{R_A}{M_B - c_Bp_A}
        \end{equation}
        The payoff $u_B^{1A}(0)$ is given by
        \begin{equation}
            u_B^{1A}(0) = \frac{M_B}{2}\left(\frac{\sqrt{R_A} + \sqrt{R_A+2p_A}}{R_A+p_A+\sqrt{R_A(R_A+2p_A)}} \right)^2
        \end{equation}
        Denoting $K_1 = \frac{c_B(2-c_B)}{2}$ and $K_2 = \left(\frac{\sqrt{R_A} + \sqrt{R_A+2p_A}}{R_A+p_A+\sqrt{R_A(R_A+2p_A)}} \right)^2$, we have $u_B^{1B}(\hat{p}_B) = u_B^{1A}(0)$ precisely at
        \begin{equation}
            \begin{aligned}
                &\frac{M_B}{c_B} = \frac{1}{c_B K_2}\left[\frac{K_2}{2}c_Bp_A + 1 \right. \\ 
                &\quad\quad\left. - \sqrt{(\frac{K_2}{2}c_Bp_A + 1)^2 - 2K_2(c_Bp_A+K_1R_A)} \right]
            \end{aligned}
        \end{equation}
        Denoting $h(R_A,p_A)$ as the right-hand side above, we obtain the result.
    \end{itemize}

\begin{IEEEbiography}[{\includegraphics[width=1in,height=1.25in,clip,keepaspectratio]{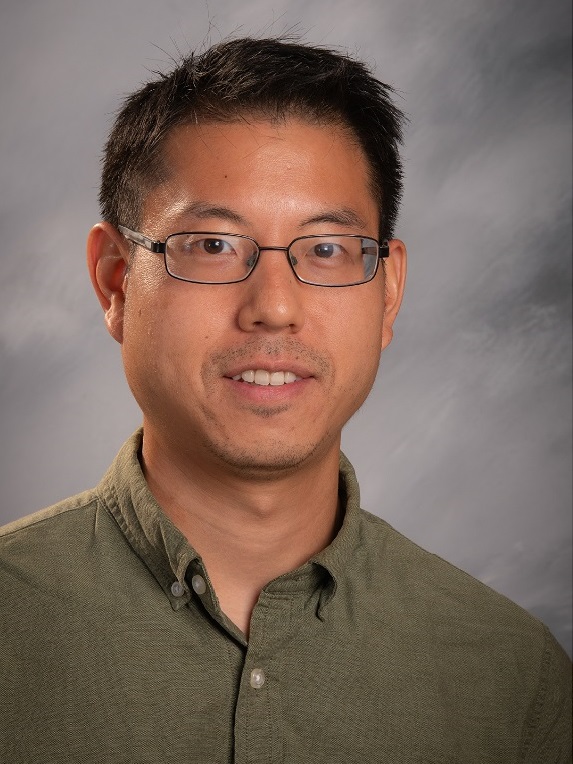}}] {Keith Paarporn}
  is an Assistant Professor in the Department of Computer Science at the University of Colorado, Colorado Springs. He received a B.S. in Electrical Engineering from the University of Maryland, College Park in 2013, an M.S. in Electrical and Computer Engineering from the Georgia Institute of Technology in 2016, and a Ph.D. in Electrical and Computer Engineering from the Georgia Institute of Technology in 2018. From 2018 to 2022, he was a postdoctoral scholar in the Electrical and Computer Engineering Department at the University of California, Santa Barbara. His research interests include game theory, control theory, and their applications to multi-agent systems and security. 
\end{IEEEbiography}

\begin{IEEEbiography}[{\includegraphics[width=1in,height=1.25in,clip,keepaspectratio]{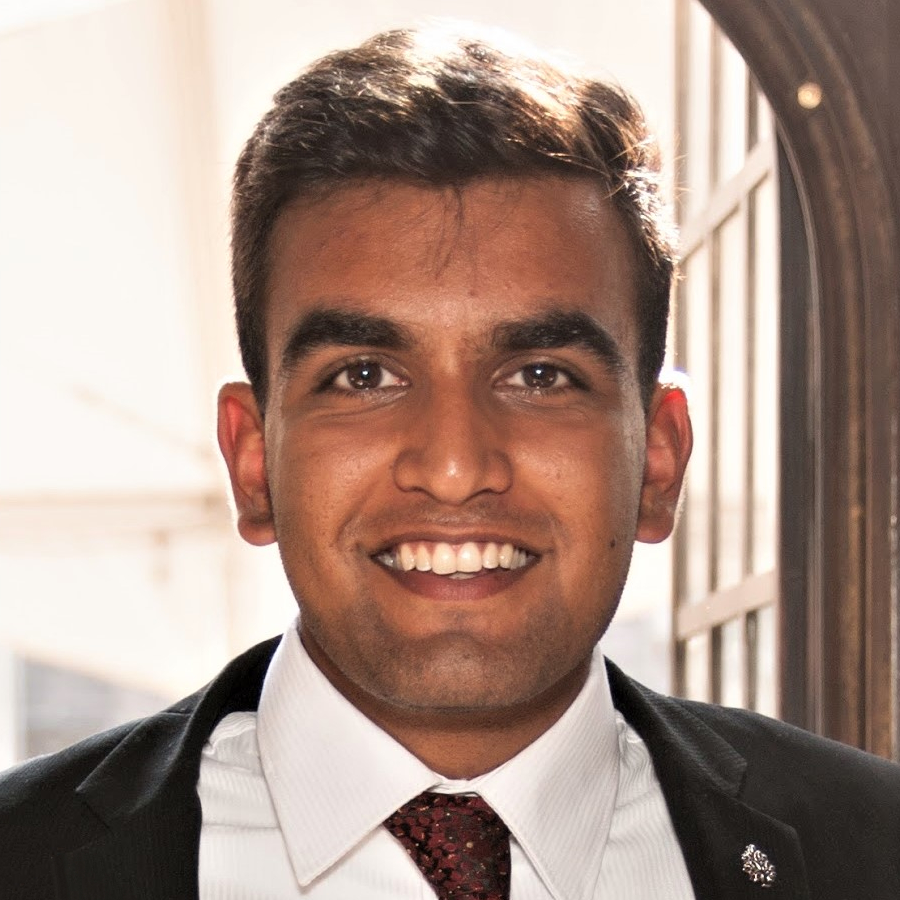}}] {Rahul Chandan}
  is a Research Scientist at Amazon Robotics. He received a B.A.Sc. in  Engineering Science from the University of Toronto in 2017, and an M.S. and PhD in Electrical and Computer Engineering from the University of California, Santa Barbara in 2019 and 2022, respectively. His research interests include game theory, multi-agent systems, optimization, and economics and computation. 
\end{IEEEbiography}

\begin{IEEEbiography}[{\includegraphics[width=1in,height=1.25in,clip,keepaspectratio]{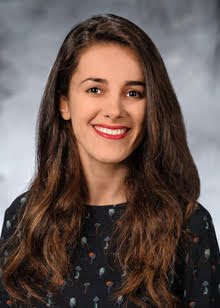}}] {Mahnoosh Alizadeh}
  is an Associate Professor of Electrical and Computer Engineering at the University of California Santa Barbara. She received the B.Sc. degree ('09) in Electrical Engineering  from Sharif University of Technology and the M.Sc. ('13) and Ph.D. ('14) degrees in Electrical and Computer Engineering from the University of California Davis, where she was the recipient of the Richard C. Dorf award for outstanding research accomplishment. From 2014 to 2016, she was a postdoctoral scholar at Stanford University. Her research interests are focused on designing scalable control, learning, and market mechanisms to promote efficiency and resiliency in societal-scale cyber-physical systems. Dr. Alizadeh is a recipient of the NSF CAREER award and the best paper award from HICSS-53 power systems track. 
\end{IEEEbiography}

\begin{IEEEbiography}[{\includegraphics[width=1in,height=1.25in,clip,keepaspectratio]{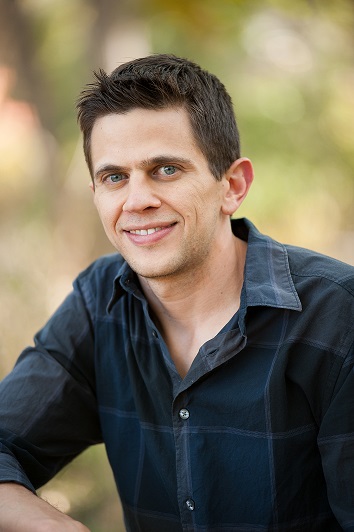}}] {Jason Marden}
    is a Professor in the Department of Electrical and Computer Engineering at
    the University of California, Santa Barbara. Jason received a BS in Mechanical Engineering in 2001 from
    UCLA, and a PhD in Mechanical Engineering in
    2007, also from UCLA, under the supervision of Jeff
    S. Shamma, where he was awarded the Outstanding
    Graduating PhD Student in Mechanical Engineering.
    After graduating from UCLA, he served as a junior
    fellow in the Social and Information Sciences Laboratory at the California Institute of Technology until
    2010 when he joined the University of Colorado. Jason is a recipient of the NSF Career Award (2014), the ONR Young Investigator Award (2015), the AFOSR Young Investigator Award (2012), the American Automatic Control Council Donald P. Eckman Award (2012), and the SIAG/CST Best SICON Paper Prize (2015). Jason’s research interests focus on game theoretic methods for the control of distributed multiagent systems.
\end{IEEEbiography}

\end{document}